\documentclass[11pt]{article}

\usepackage[utf8]{inputenc} 
\usepackage[T1]{fontenc}    
\usepackage{microtype}      

\usepackage{amsmath}
%
\interdisplaylinepenalty=2500

\usepackage{colortbl}
\usepackage{cleveref}
\usepackage{booktabs}
\usepackage{dsfont}
\usepackage{algorithm}
\usepackage{algorithmic}
\usepackage[dvipsnames]{xcolor}
\usepackage{multicol}
\usepackage{multirow}
\usepackage{ctable}
\usepackage{amssymb}
\usepackage{amsfonts}       

\usepackage{mathtools}
\usepackage{amsthm}
\usepackage{subcaption}


\providecommand{\norm}[1]{\ensuremath{\left\lVert#1\right\rVert}}
\providecommand{\inner}[1]{\ensuremath{\left\langle#1\right\rangle}}

\newcommand{\eps}{\epsilon}
\newcommand{\ind}{\mathds{1}}
\DeclareMathOperator*{\E}{\mathbb{E}}
\newtheorem{theorem}{Theorem}[section]
\newtheorem*{theorem*}{Theorem}
\newtheorem{proposition}[theorem]{Proposition}
\newtheorem{lemma}[theorem]{Lemma}
\newtheorem*{lemma*}{Lemma}

\newtheorem{corollary}[theorem]{Corollary}
\newtheorem{definition}[theorem]{Definition}

\newtheorem{remark}[theorem]{Remark}

\DeclareMathOperator*{\argmax}{arg\,max}

\newcommand{\Expected}[2]{\mathbb{E}_{#1}\left[ #2 \right]}

\DeclareMathOperator{\MMD}{MMD}


\newcommand{\calD}{\ensuremath{\mathcal{D}}}

\newcommand{\calF}{\ensuremath{\mathcal{F}}}
\newcommand{\calG}{\ensuremath{\mathcal{G}}}

\newcommand{\calM}{\ensuremath{\mathcal{M}}}

\newcommand{\calX}{\ensuremath{\mathcal{X}}}

\newcommand{\calZ}{\ensuremath{\mathcal{Z}}}


\newcommand{\bX}{\ensuremath{\mathbf{X}}}

\newcommand{\alglinelabel}{%
  \addtocounter{ALC@line}{-1}
  \refstepcounter{ALC@line}
  \label
}

\newcommand{\newlib}{\textsc{DP-Auditorium }}
\newcommand{\renyitester}{\textsc{R\'enyiTester }}
\newcommand{\gilberttester}{\textsc{HistogramTester }}
\newcommand{\hstester}{\textsc{HockeyStickTester }}
\newcommand{\mmdtester}{\textsc{MMDTester }}

\newcommand{\dpmean}{\textsc{DPLaplace }}

\newcommand{\nondplaplaceone}{\textsc{NonDPLaplace1 }}
\newcommand{\nondplaplacetwo}{\textsc{NonDPLaplace2 }}
\newcommand{\nondpgaussianone}{\textsc{NonDPGaussian1 }}
\newcommand{\nondpgaussiantwo}{\textsc{NonDPGaussian2 }}

\newcommand{\svtone}{\textsc{SVT1 }}
\newcommand{\svttwo}{\textsc{SVT2 }}
\newcommand{\svtthree}{\textsc{SVT3 }}
\newcommand{\svtfour}{\textsc{SVT4 }}
\newcommand{\svtfive}{\textsc{SVT5 }}
\newcommand{\svtsix}{\textsc{SVT6 }}
\newcommand{\joint}{\textsc{JointMedian }}
\newcommand{\noisymax}{\textsc{NoisyMax }}

\newcommand{\scaledSGD}{\textsc{ScaledGD }}

\newcommand{\arxiv}[1]{#1}
\newcommand{\narxiv}[1]{}

\begin{document}

\title{DP-Auditorium: a Large Scale Library for Auditing Differential Privacy}

\author{William Kong, Andr\'es Mu\~noz Medina,\\ M\'onica Ribero, and Umar Syed \thanks{Google Research, NY, \{weiweikong, ammedina, mribero, usyed\}@google.com}
}

\maketitle

\begin{abstract}
New regulations and increased awareness of data privacy have led to the deployment of new and more efficient differentially private mechanisms across public institutions and industries. Ensuring the correctness of these mechanisms is therefore crucial to ensure the proper protection of data. However, since differential privacy is a property of the mechanism itself, and not of an individual output, testing whether a mechanism is differentially private is not a trivial task. While ad hoc testing techniques exist under specific assumptions, no concerted effort has been made by the research community to develop a flexible and extendable tool for testing differentially private mechanisms. This paper introduces \newlib as a step advancing research in this direction. \newlib abstracts the problem of testing differential privacy into two steps: (1) measuring the distance between distributions, and (2) finding neighboring datasets where a mechanism generates output distributions maximizing such distance. From a technical point of view, we propose three new algorithms for evaluating the distance between distributions. While these algorithms are well-established in the statistics community, we provide new estimation guarantees that exploit the fact that we are only interested in verifying whether a mechanism is differentially private, and not in obtaining an exact estimate of the distance between two distributions. \newlib is easily extensible, as demonstrated in this paper by implementing a well-known approximate differential privacy testing algorithm into our library. We provide an extensive comparison to date of multiple testers across varying sample sizes and differential privacy parameters, demonstrating that there is no single tester that dominates all others, and that a combination of different techniques is required to ensure proper testing of mechanisms.  
\end{abstract}


\section{Introduction}
\label{sec:intro}
The protection of data contributors' privacy is a growing concern as data-driven algorithms become increasingly ubiquitous. Differential privacy (DP) addresses this concern by developing mechanisms, which are private counterparts for traditional algorithms that consume user data and that guarantee with high probability that the output of the mechanism does not reveal any information about a user that could not have been known without the user being part of the input data. Specifically, a mechanism is differentially private if, for neighboring datasets (i.e., datasets that differ in only one record), an adversary is unable to determine from the output of the mechanism which of the neighboring datasets the output came from. Differential privacy is widely used by industry and governments as it meets strict user rights and data protective regulations such as the General Data Protection Regulation (GDPR) and the Digital Markets Act (DMA).

 As DP becomes more prevalent, so too do the risks of incorrect implementations, which can potentially jeopardize user privacy. Accordingly, auditing differentially private mechanisms is a critical step for increasing confidence in the correctness of DP pipelines and ensuring that user rights are met. However, the variety of mechanisms, tasks, and generalized definitions for differential privacy make it difficult to inspect the correctness of this vast amount of algorithms. This suggests that automated and generalizable end-to-end tests for the implementation of such mechanisms are desirable for expanding the confidence of their correctness.

In this paper, we take a step forward in standardizing the way in which differential privacy is audited. We introduce \newlib, a DP auditing library implemented in Python that allows practitioners to test DP guarantees from only black-box access to a mechanism and allows for continuous improvements from the research community. \newlib also introduces a new family of function-based tests for R\'enyi, pure, and approximate differential privacy. This family allows for black-box detection of a variety of bugs without knowledge or assumptions about the source of randomness (e.g., noise decay, or variance).

\newlib implements two main components: testers and dataset finders. Testers take a mechanism and a pair of datasets as input and aim to find violations of a privacy guarantee on the provided datasets. Dataset finders suggest the neighboring datasets. \newlib  combines these components allowing for flexible testing of diverse mechanisms and privacy definitions in an automated way and is able to catch bugs reported in the literature as well as implementation bugs in our own experiments for this paper (see Section~\ref{sec:svt4}).

\subsection{Contributions.}
\begin{itemize}
    \item 
    We introduce a family of function-based testers for R\'{e}nyi, pure, and approximate differential privacy. These testers use variational representations of divergence measures between distributions well known to the statistical community to lower bound the true privacy parameters of a mechanism on a given pair of datasets. The specific application to privacy allows us to derive new finite complexity bounds for estimating the Hockey Stick divergence, the R\'{e}nyi divergence, and the maximum mean discrepancy (MMD). While the complexity of previous histogram-based testers depends on the discretization of the output space \cite{GM18, BSBV21}, our testers optimize over a family of functions that parametrize the complexity of the test.  We also provide an industry ready implementation that consistently outperforms histogram based divergence estimators.  This new function based approach is also more flexible and is amenable to testing high-dimensional and continuous output spaces, without an explicit discretization of the output space.
    \item We implement dataset finders that suggest neighboring datasets. We implement known approaches for finding neighboring datasets, namely grid search and randomized search. We expand this suite by incorporating Bayesian optimization techniques that generate datasets maximizing the divergences we wish to estimate. While grid search is advantageous when side-information about potential bugs is available, optimization-based finders accelerate and in some cases are even the determinant in finding certain bugs.  
    \item We conclude our paper with an extensive comparison of testers and dataset finders. We show that they are able to detect a variety of bugs. While experiments show that no tester is globally better, we provide insight into what testers are more suitable for detecting certain bugs under different privacy definitions.
    \item \newlib introduces a \textit{tester} wrapper that can integrate most if not all of the previous black-box testing work. The library will be open sourced and open to researchers and practitioners to experiment and test DP mechanisms. 
\end{itemize}

The remainder of the paper is structured as follows. In section \cref{sec:setup}, we provide preliminary notions and notation. In \cref{sec:testers}, we formalize privacy tests and dataset finding algorithms. We conclude our paper in \cref{sec:experiments} by evaluating and discussing different testers. 

\subsection{Related work. }\label{sec:related_work}

There are generally two approaches used in differential privacy testing that we review below.

\textbf{Property testing literature.} The first approach, which includes our proposed methodology, aims to directly estimate the \textit{effective} privacy parameters from black-box access to the tested mechanism and compare these effective privacy parameters with the ones stated by the privacy guarantee. This approach focuses on estimating the distance between the distributions induced by the mechanism on neighboring datasets. The problem thus consists of two key challenges: 1) estimating the distance between distributions given two fixed neighboring datasets, and 2) finding the pair of neighboring datasets that maximize the distance between these distributions.

The problem of estimating distances between distributions has been extensively studied in the statistics and hypothesis testing community. While providing a comprehensive overview of this literature is beyond the scope of this work, we briefly highlight \cite{NWJ10,zhao,sriperumbudur2012empirical}, which consider estimating probability distances through optimization methods over function spaces. Their work provides asymptotic guarantees, while we provide strict finite sample complexities to obtain a lower bound on divergences between two distributions.

For the specific task of estimating R\'enyi divergence, our estimator is inspired by the work of \cite{BDKRW21}, which considers using neural networks to estimate R\'enyi divergence. The finite sample complexity bounds provided in that work, however, depend on the structure of the neural network and can rapidly become vacuous for the purpose of testing differential privacy. In contrast, our complexity bounds are independent of the network structure, as we are primarily concerned with lower bounds on the R\'enyi divergence. 

Several related approaches lower bound privacy parameters but make assumptions
on the type of mechanisms to provide finite sample complexity guarantees.
For example, \cite{DM22} proposes to estimate the \textit{regularized kernel R\'enyi
divergence}, a lower bound on the R\'enyi divergence between distributions of a
randomized mechanism. However, this approach requires knowledge about the
covariance matrix of the underlying distributions, which is impractical for most
mechanisms other than the Gaussian and Laplace mechanisms, and inaccessible in
the black-box setting. Similarly, \cite{DGKK22} provides techniques for
lower-bounding $\epsilon$. Unfortunately, the previous method needs access to the cumulative distribution function of the distribution of the privacy loss random variable, which is unknown under our assumption of black box mechanism evaluation. None of the aforementioned techniques considers the second problem of finding datasets where the privacy guarantee can break.

 A large body of literature exists regarding the testing of a mechanism's privacy for specific settings or under assumptions on the mechanism. \cite{DJT13} proposes a differential privacy tester for mechanisms with discrete and finite output spaces that requires access to the distribution over datasets and the probability measure over outputs induced by the tested mechanism. For the dataset search, instead of testing privacy in the worst case setting, they test if the mechanism satisfies the guarantee over fixed datasets with high probability. More importantly, the tester does not work for continuous output spaces. \cite{GM18} presents a test for discrete $(\epsilon, \delta)$-DP mechanisms but omits the problem of finding the worst case pair of neighboring datasets. StatDP \cite{statdp} proposes a system for detecting differential privacy violations by post-processing the output of the mechanisms through different statistics. The tester requires semi-black box access to the mechanisms (as one of the post processing techniques requires running the mechanism without privacy), which is infeasible for auditing certain mechanisms.

DP-Sniper \cite{dpsniper} provides an $\epsilon$-DP black-box mechanism tester that explicitly attempts to find a set in the output space that maximizes the difference in probability for the output of the mechanism. Their framework is specific to detecting $\epsilon$-DP, and the high sample complexity (millions of samples) makes this approach impractical for some mechanisms. On the dataset search problem, DPSniper uses predetermined rules that may hinder its ability to detect violations on new tasks, and it runs under non-classic neighboring relations such as the $\ell_{\infty}$ relation instead of the classic swap or add/remove definition of neighboring.

\textbf{ML focused testing literature.} The second approach focuses on machine learning predictive models learning algorithms rather than arbitrary statistical tasks \cite{JUO20}.

On a different vein, some work uses adversarial attacks that try to break the privacy definition. Examples of these approaches are the celebrated membership inference attacks \cite{JE19,ROF21,CYZF20} and data reconstruction attacks \cite{Guo22, Balle22} of deep learning models trained with DP-SGD. Hence, the validation of whether a mechanism satisfies privacy is linked to the ability of the attack to succeed. The tests generated by these approaches are very valuable when trying to understand potential privacy risks on a single data set, by manually designing canaries that are expected to have highest sensitivity. For example,  \cite{LMFZ22} extends the work of \cite{dpsniper} by developing data poisoning attacks to explore the space of datasets focusing on machine learning predictive models learning algorithms rather than arbitrary statistical tasks.  

Running these tests generally requires white box access to the trained model and, more importantly, requires access to large portions of the training data, making auditing of a privately trained model impossible for someone who is not the data curator. 
Consequently, the resulting lower bounds from these approaches tend to be loose \cite{NSTP+, NHSB+23}. Moreover, the budget $\epsilon$ predicted by these experiments is generally much smaller than the theoretical budget. For example, some authors \cite{jagielski2020auditing} assert that their proposed models were private with an $\epsilon = 10^{-3}$ when these models were trained without privacy. 

Dataset search in this space typically relies on pre-determined datasets, or canaries, which do not attempt to understand the worst-case (unknown) scenario that differential privacy aims to protect. However, recent work has introduced a randomized approach to crafting canaries that, with high probability, finds datasets that break the privacy guarantee for the Gaussian mechanism in high dimensions \cite{AKOOMS23}.

\section{Preliminaries}
\label{sec:setup}

\textbf{Notation.} For $n$ a natural number, $[n]=\{1, ...,n\}$ denotes the set the set of positive numbers up to $n$. 
We reserve $\epsilon, \alpha, \delta$ for privacy parameters, $\gamma$ will represent the error tolerance, $\beta$ will represent the probability of failure of a given test. 

We start by formally defining the tested privacy properties and notions.

\begin{definition}
\label{def:neighbor}
Let $D_0, D_1 \subset \calD$ be datasets, where $\calD$ is a domain of datasets that consist of records taking values in  a domain $\calZ$. We say that $D_0$ and $D_1$ are neighbors, denoted $D_0 \sim D_1$, if $D_0$ is obtained by adding or removing a record $z\in\calZ$ from $D_1$. 
\end{definition}

\begin{remark}
The notion of neighboring can be replaced by the swap definition where $D_0$ is obtained by swapping one record in $D_1$ by another record.
\end{remark}

A \emph{mechanism} is a randomized function $\calM: \calD \rightarrow \calX$ (where appropriate, we denote
$\calM(D)$ as either a single output of mechanism $\calM$ on an input $D$ or the distribution of the outputs). A mechanism is differentially private if it generates `close' distributions for neighboring datasets.  Several measures of closeness have been proposed. 

\begin{definition}
\label{def:dp}
A randomized mechanism $\calM: \calD \to \calX$ satisfies $(\epsilon, \delta)$--approximate differential privacy ($(\epsilon,\delta)$--DP), if for every pair of neighboring datasets $D_0$ and $D_1$ and every subset $A \subseteq \calX$ of the output space, it holds that
\begin{equation}
\label{eq:dp}
    P(\calM(D_0) \in A) \leq e^{\epsilon}P(\calM(D_1) \in A) +\delta.
\end{equation}
$\calM$ satisfies pure differential privacy, or is $\epsilon$--differentially private ($\epsilon$--DP), when $\delta=0$.
\end{definition}



A different and frequently used divergence in the differential privacy literature is the R\'enyi divergence.
\begin{definition}
\label{def:renyi-divergence}
Let $P$ and $Q$ be two distributions. The R\'enyi divergence of order $\alpha > 1$ between the two distributions is given by
\begin{equation*}
    R_\alpha(P ||Q) := \frac{1}{\alpha - 1} \log \Expected{Q}{\left(\frac{dP}{dQ}\right)^\alpha}.
\end{equation*}
\end{definition}

\begin{definition}
\label{def:renyi-dp}
A randomized mechanism $\calM: \calD \to \calX$ satisfies $(\alpha, \epsilon)$--R\'enyi differential privacy if for any pair of neighboring datasets $D_0$ and $D_1$, it holds that
 \begin{equation}
     \label{eq:def-renyi-dp}
     R_{\alpha}(\calM(D_0)||\calM(D_1)) \leq \epsilon
 \end{equation}
 \end{definition}

\section{Testing Framework}
\label{sec:building-blocks}

This section presents the main DP testing framework of our library and discusses some of its properties.

As seen in the previous section, asserting that a mechanism is private requires verifying that the  output distributions $\calM(D_0)$ and $\calM(D_1)$  are close under some divergence for \emph{all} neighboring datasets. If there exists a pair of neighboring datasets $D_0$ and $D_1$ for which the divergence of $\calM(D_0)$ and $\calM(D_1)$ is large, it follows that the mechanism is not private. Since we treat mechanism $\calM$ as a black-box, we cannot observe the true output distributions of the mechanism and exact calculation of the divergence between $\calM(D_0)$ and $\calM(D_1)$ is not possible. Instead, we introduce Algorithm~\ref{alg:general} that suggests a general procedure for finding violations of privacy properties of a mechanism, relying on samples from the mechanism to estimate the divergence. 

It is worth mentioning that our library focuses in completeness but not soundness of the tests. This means that a mechanism failing the test will reject the hypothesis of a mechanism's privacy guarantee, but passing the test is not enough evidence to accept the hypothesis of the mechanism's privacy guarantee. It has been shown \cite{GM18} that sound tests for pure DP have exponential sample complexity. Sound tests for approximate DP have been developed only for finite output space mechanisms, and over a fixed pair of datasets. Testers for black-box mechanisms that can deal with exponentially large output spaces, or that can explore the space of datasets  and confirm the privacy guarantee of a mechanism is in general a hard open problem.

\newlib abstracts the two main steps of \cref{alg:general} and allows users to specify any divergence estimator and neighboring generating routines. The remainder of the paper will focus on specific implementations of the divergence estimator $\widehat{D}(\cdot\|\cdot)$ as well as the introduction of non-parametric Bayesian optimizers to handle the process of generating neighboring datasets.


\begin{algorithm}
\caption{Generalized lower bound test method for DP}
\label{alg:general}
\begin{algorithmic}[1]
\STATE{{\bf Inputs:} mechanism $\calM$, trial count $T$, a dataset generator $\mathcal G$, divergence estimator $\widehat{D}$, privacy threshold $\tau$, probability of failure $\beta$;}
\FOR{$t = 1, \ldots T$}
\STATE{Generate datasets $\{D_0^t, D_1^t\}$ using $\mathcal G$.}  
\STATE{Set ${\calD}_1^t := \widehat{D}(\calM(D_0^t)\,\|\,\calM(D_1^t))$.}
\STATE{Set ${\calD}_2^t := \widehat{D}(\calM(D_1^t)\,\|\,\calM(D_0^t))$.}
\alglinelabel{algln:threshold} \IF{$\max\{{\calD}_1^t, {\calD}_2^t\}  >  \tau$}
\RETURN{w.p. $1-\beta$, $\calM$ is not private}
\ENDIF
\ENDFOR
\RETURN{Could not find privacy violation in $T$ trials.}
\end{algorithmic}
\end{algorithm}


\section{Implementations of the Framework}
\label{sec:testers}
This section presents implementations of the framework given in Algorithm~\ref{alg:general}. More specifically, it considers the following two types:

\begin{itemize}
\item \textit{Histograms}. These methods estimate a particular divergence using histograms (e.g., see \cite{GM18}), which can (i) approximate distribution to arbitrary precision and (ii) provide examples of where privacy violations may occur. However, these methods are prohibitively expensive in high-dimensional settings and large (mechanism) output spaces. \vspace*{1em}


\item \textit{Dual divergences}. These methods lower bound a particular divergence using its dual formulation. Specifically, these methods solve the associated empirical maximization problem using some well-known ML heuristics that scale well to high-dimensional settings. 
\end{itemize}

It is worth noting that the dual formulation mentioned above is well-known to the statistics community. However, to the best of our knowledge, we provide the first practical implementation for the purpose of testing differential privacy. In Section~\ref{sec:experiments}, we also show that dual divergence-based approaches generally outperform histogram-based methods in practice.


\subsection{Pure and R\'enyi DP}

This section presents a tester that checks if a mechanism satisfies a given level of pure or R\'enyi DP. It primarily focuses on developing results for R\'enyi DP and the tester implements the framework in Algorithm~1.

\subsubsection{R\'enyi divergence lower bound}
We first introduce a variational formulation of the R\'enyi divergence based on the results in \cite{NWJ10, BDKRW21}. 


\begin{theorem}[Theorem 3.1 in \cite{BDKRW21}]
\label{thm:variational-renyi}
 Let $P$ and $Q$ be probability measures on $(\calX, \calF)$ and $\alpha>1$. 
Let $\Gamma$ be any function space such that $M_b(\calX) \subseteq \Gamma \subseteq M(\calX)$ where $M_b(\calX)$  and $M(\calX)$ are the sets of measurable bounded and measurable functions on $\calX$ respectively. Then,
\begin{align}
\nonumber & R_{\alpha}(P\,\|\,Q) =  \sup_{h \in \Gamma}  \tfrac{\alpha}{\alpha-1}\log \left(\Expected{X\sim P}{ e^{(\alpha-1)h(X)}}\right)\\
\label{eq:variational-renyi-formula} & \hspace{2.5cm} - \log\left(\Expected{X\sim Q}{ e^{\alpha h(X)}}\right)
\end{align}
Moreover, if $P$ and $Q$ admit densities, then the function optimizing the above quantity is given by $h = \log P/Q.$
\end{theorem}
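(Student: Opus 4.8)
The plan is to establish the identity by proving two matching inequalities for the functional
\[
F(h) := \frac{\alpha}{\alpha-1}\log\left(\mathbb{E}_{X\sim P}\left[e^{(\alpha-1)h(X)}\right]\right) - \log\left(\mathbb{E}_{X\sim Q}\left[e^{\alpha h(X)}\right]\right),
\]
namely $F(h) \le R_\alpha(P\,\|\,Q)$ for every $h \in M(\calX)$, which gives $\sup_{h\in\Gamma} F(h) \le R_\alpha(P\,\|\,Q)$ since $\Gamma \subseteq M(\calX)$, together with a matching lower bound obtained by exhibiting a (near-)optimal $h$. Throughout I write $r := dP/dQ$ for the density ratio, assuming $P \ll Q$; when $P \not\ll Q$ one has $R_\alpha(P\,\|\,Q) = \infty$ for $\alpha > 1$, and the identity must be handled separately by driving $F$ to $+\infty$ along functions supported on the singular part.

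For the upper bound, the key tool is H\"older's inequality with the conjugate exponents $\alpha$ and $\alpha/(\alpha-1)$. Rewriting the $P$-expectation as a $Q$-expectation through $r$,
\[
\mathbb{E}_{X\sim P}\left[e^{(\alpha-1)h}\right] = \mathbb{E}_{X\sim Q}\left[r\, e^{(\alpha-1)h}\right] \le \left(\mathbb{E}_{X\sim Q}\left[r^{\alpha}\right]\right)^{1/\alpha}\left(\mathbb{E}_{X\sim Q}\left[e^{\alpha h}\right]\right)^{(\alpha-1)/\alpha},
\]
where the exponent on $e^{(\alpha-1)h}$ works out precisely because $(\alpha-1)\cdot\frac{\alpha}{\alpha-1} = \alpha$. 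Taking logarithms, multiplying through by $\frac{\alpha}{\alpha-1} > 0$, and rearranging converts this directly into $F(h) \le \frac{1}{\alpha-1}\log \mathbb{E}_{X\sim Q}[r^\alpha] = R_\alpha(P\,\|\,Q)$, the desired bound.

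For the matching value I would substitute $h^\star = \log r$ and compute $\mathbb{E}_P[e^{(\alpha-1)h^\star}] = \mathbb{E}_Q[r^\alpha] = \mathbb{E}_Q[e^{\alpha h^\star}]$; plugging these in collapses $F(h^\star)$ to $\left(\frac{\alpha}{\alpha-1}-1\right)\log \mathbb{E}_Q[r^\alpha] = R_\alpha(P\,\|\,Q)$, which recovers the stated optimizer. This is consistent with the equality case of H\"older, which forces $e^{\alpha h} \propto r^\alpha$, i.e. $h = \log r$ up to an additive constant; one checks that $F$ is invariant under $h \mapsto h + c$, so the optimizer is determined only up to such shifts.

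The main obstacle is that the argument above pins down the identity only when $\log r \in \Gamma$ and the relevant expectations are finite, whereas the statement must hold for \emph{every} $\Gamma$ with $M_b(\calX) \subseteq \Gamma$, including $\Gamma = M_b(\calX)$, where the possibly unbounded optimizer $\log r$ is excluded. I would close this gap with a truncation argument: set $h_n := \max\{\min\{\log r, n\}, -n\} \in M_b(\calX)$ and show $F(h_n) \to R_\alpha(P\,\|\,Q)$ via monotone/dominated convergence applied to the $Q$-integrals of the $r^{\alpha}$-type integrands, so that $\sup_{h\in\Gamma} F(h) \ge R_\alpha(P\,\|\,Q)$ even over bounded functions. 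Verifying the integrability needed to pass to the limit, and disposing of the degenerate cases $R_\alpha = \infty$ and $P \not\ll Q$, is the delicate part; the H\"older step itself is routine.
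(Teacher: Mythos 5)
You should know at the outset that the paper offers no proof of this statement to compare against: it is imported verbatim as Theorem~3.1 of the cited reference \cite{BDKRW21}, so your proposal has to stand on its own --- and it does. The H\"older step is exactly right: with $r = dP/dQ$ and conjugate exponents $\alpha$ and $\alpha/(\alpha-1)$, the inequality rearranges to $F(h) \le \tfrac{1}{\alpha-1}\log \mathbb{E}_Q[r^\alpha] = R_\alpha(P\,\|\,Q)$ for every $h$ with finite exponential moments, and $h^\star = \log r$ attains equality when $R_\alpha(P\,\|\,Q) < \infty$. Your truncation argument also goes through as sketched: splitting $\calX$ into $\{r \ge 1\}$, where $e^{(\alpha-1)h_n} \uparrow r^{\alpha-1}$ and $e^{\alpha h_n} \uparrow r^{\alpha}$ monotonically, and $\{r < 1\}$, where both integrands are bounded by $1$, monotone plus dominated convergence yield $F(h_n) \to R_\alpha(P\,\|\,Q)$ in the finite case. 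The two degenerate cases you flag but do not close each take only a couple of lines, which I record for completeness. If $P \not\ll Q$, choose $S$ with $P(S) > 0 = Q(S)$ and set $h_n = n\ind_S$; then $\mathbb{E}_Q[e^{\alpha h_n}] = 1$ while $\mathbb{E}_P[e^{(\alpha-1)h_n}] \ge e^{(\alpha-1)n}P(S)$, so $F(h_n) \ge \alpha n + \tfrac{\alpha}{\alpha-1}\log P(S) \to \infty$. If $P \ll Q$ but $\mathbb{E}_Q[r^\alpha] = \infty$, keep your $h_n$ and write $A_n = \mathbb{E}_P[e^{(\alpha-1)h_n}]$ and $B_n = \mathbb{E}_Q[e^{\alpha h_n}]$; comparing the regions $\{r < e^{-n}\}$, $\{e^{-n} \le r \le e^{n}\}$, and $\{r > e^{n}\}$ gives $A_n \ge B_n - e^{-\alpha n}$ (the two integrands coincide on the middle region, $r e^{(\alpha-1)n} \ge e^{\alpha n}$ on the top one, and the bottom one contributes at most $e^{-\alpha n}$ to $B_n$), so $F(h_n) \ge \tfrac{1}{\alpha-1}\log B_n + o(1) \to \infty$, because $B_n \to \infty$ by monotone convergence and the coefficient $\tfrac{\alpha}{\alpha-1}$ on the first log term strictly exceeds $1$; this lopsidedness is precisely what rescues the infinite case. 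The only remaining caveat is notational: for unbounded $h$ both expectations can be infinite simultaneously, so the supremum in \eqref{eq:variational-renyi-formula} should be read with the convention $\infty - \infty = -\infty$ (or restricted to $h$ with finite exponential moments); since the supremum is approached along bounded functions anyway, this is a convention, not a gap in your argument.
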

Since the complexity of the function space $\Gamma$ can be arbitrarily large, computing the supremum in \cref{eq:variational-renyi-formula} is generally intractable. Hence, instead of estimating $R_\alpha(\cdot\|\cdot)$ to an arbitrary precision, we propose to (i) replace $P$ and $Q$ with samples $\bX_{0}$ and $\bX_{1}$, respectively, and (ii) replace $\Gamma$ with a (possibly) more restrictive function space $\Phi$. Under these modifications, we then use a one-sided confidence interval to obtain a suitable lower bound for $R_\alpha(\cdot\|\cdot)$. 

Let us now make these relaxations formal through the following definition.



\begin{definition}
\label{def:renyi-partial-defs}
Let $h:\calX \to \mathbb{R}$ be a function in $\Phi$ and $\alpha>1$. Define
\begin{align}
  \nonumber  R^h_{\alpha}(P\,\|\,Q) &:= \tfrac{\alpha}{\alpha-1} \log \left( \int e^{(\alpha-1) h(x)}dP\right) \\
  & \hspace{1.cm} -  \log \left( \int e^{\alpha h(x)}dQ\right)
\end{align}
 Given samples $\bX_0 = (X_{0,1},..., X_{0,n}) \sim P^n$, $\bX_1 = (X_{1,1},...,X_{1,n}) \sim Q^n$, define its empirical counterpart
 \begin{align}
   \nonumber R^{h,n}_{\alpha}(\bX_0\,\|\,\bX_1) &:= \tfrac{\alpha}{\alpha-1} \log \left( \frac{1}{n} \sum_{i=1}^n e^{(\alpha-1) h(X_{0,i})}\right)\\
    \label{eq:empirical-renyi} &\hspace{1.cm} -  \log \left( \frac{1}{n} \sum_{i=1}^n e^{ \alpha h(X_{1,i})}\right).
 \end{align}
Let $h\in \Phi \subseteq \Gamma$, define the (one-sided) estimation error as the smallest non negative $\gamma(\Phi, n, \beta, \alpha) >0$ such that 
\begin{equation}
\label{eq:threshold_cond}
R^{h}_\alpha(P\,\|\,Q) > R^{h,n}_\alpha(\bX_{0}\,\|\,\bX_{1})-\gamma(\Phi, n, \beta, \alpha)
\end{equation}
with probability at least $1-\beta$ over the randomness of the $n$ samples from $P$ and $Q$.
\end{definition}

Observe that the parameter $\gamma$ can depend on the complexity of the function class. As a concrete example, we characterize the complexity of the function class 
\begin{equation} 
\Phi_C := \{h \in \Gamma : |h(x)|<C \text{ for all } x \in \calX \} \label{eq:PhiC}
\end{equation}
and prove a suitable relationship between $\gamma$ and $n$.

\begin{theorem}
\label{thm:sample_complexity_renyi}
Let $P$ and $Q$ be two distributions. Let $\Gamma$ be as in Theorem~\ref{thm:variational-renyi}, $\Phi_C \subseteq \Gamma$ be as in \eqref{eq:PhiC}, and $h \in \Phi_C$. Let $\bX_0 = (X_{0,1},...,X_{0,n})$ and $\bX_1 = (X_{1,1},...,X_{1,n})$ be $n$  realizations of $P$ and $Q$, respectively. Then, if $\eta \in (0,1]$, and 
\[
n \geq \max\left(\frac{3e^{2(\alpha-1)C}\log(2/\beta)}{\eta^2}, \frac{2e^{\alpha C}\log(2/\beta)}{\eta^2} \right) =: \underline{n},
\]
then, with probability at least  $1-\beta$, it follows that 
\begin{align}
\label{eq:renyi_lower}
 R_\alpha&(P||Q) \geq
 R^{h,n}_{\alpha}(\bX_0|| \bX_1)  - \log \left(\frac{1+\eta}{1-\eta} \right).
\end{align}
\end{theorem}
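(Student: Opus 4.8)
The plan is to fix the single function $h \in \Phi_C$ supplied to the estimator and to show that the empirical functional $R^{h,n}_\alpha(\bX_0\,\|\,\bX_1)$ cannot overshoot the population functional $R^h_\alpha(P\,\|\,Q)$ by more than $\log\frac{1+\eta}{1-\eta}$, except on an event of probability at most $\beta$. Once this is established, the theorem follows immediately from the variational representation of Theorem~\ref{thm:variational-renyi}: since $\Phi_C \subseteq \Gamma$, the fixed $h$ is an admissible test function, so $R_\alpha(P\,\|\,Q) = \sup_{g \in \Gamma} R^g_\alpha(P\,\|\,Q) \ge R^h_\alpha(P\,\|\,Q)$, and chaining this with the high-probability upper bound on $R^{h,n}_\alpha - R^h_\alpha$ yields \eqref{eq:renyi_lower}.

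First I would isolate the two empirical averages that drive the estimator, namely $\hat A := \frac1n\sum_{i=1}^n e^{(\alpha-1)h(X_{0,i})}$ and $\hat B := \frac1n\sum_{i=1}^n e^{\alpha h(X_{1,i})}$, with population counterparts $A := \Expected{X\sim P}{e^{(\alpha-1)h(X)}}$ and $B := \Expected{X\sim Q}{e^{\alpha h(X)}}$. In this notation $R^{h,n}_\alpha(\bX_0\,\|\,\bX_1) - R^h_\alpha(P\,\|\,Q) = \frac{\alpha}{\alpha-1}\log(\hat A/A) - \log(\hat B/B)$, so the overshoot is large only when $\hat A$ overestimates $A$ or $\hat B$ underestimates $B$. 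The goal is therefore a one-sided (upper) control of the ratio $\hat A/A$ together with a one-sided (lower) control of $\hat B/B$.

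The key step is concentration, and here the constraint $h \in \Phi_C$ does the work: since $|h| < C$, the summands are bounded, with $e^{(\alpha-1)h(X)} \in (e^{-(\alpha-1)C}, e^{(\alpha-1)C})$ and $e^{\alpha h(X)} \in (e^{-\alpha C}, e^{\alpha C})$, so in particular the means satisfy $A \ge e^{-(\alpha-1)C}$ and $B \ge e^{-\alpha C}$. I would apply a \emph{multiplicative} Chernoff bound (relative Hoeffding) to each average: the upper tail for $\hat A$ gives $\Prob{}{\hat A \ge (1+\eta)A} \le \exp(-\eta^2 n A / (3 e^{(\alpha-1)C}))$ and the lower tail for $\hat B$ gives $\Prob{}{\hat B \le (1-\eta)B} \le \exp(-\eta^2 n B/(2 e^{\alpha C}))$. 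Substituting the a-priori lower bounds on $A$ and $B$ turns each relative estimate into an explicit requirement on $n$, and forcing both failure probabilities below $\beta/2$ produces the two sample-size thresholds comprising $\underline n$ (the upper-tail analysis of $\hat A$, with constant $3$ and exponent $\alpha-1$, reproduces the $3e^{2(\alpha-1)C}$ term); a union bound then shows that $\{\hat A \le (1+\eta)A\} \cap \{\hat B \ge (1-\eta)B\}$ holds with probability at least $1-\beta$ whenever $n \ge \underline n$. I would stress that plain additive (Hoeffding) concentration is too weak here, since the averages enter through logarithms and only relative error translates into a bounded additive error on $R^{h,n}_\alpha$.

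Finally, on the intersection event I would substitute the two multiplicative bounds into the identity from the second step: $\frac{\alpha}{\alpha-1}\log(\hat A/A) \le \frac{\alpha}{\alpha-1}\log(1+\eta)$ while $-\log(\hat B/B) \le -\log(1-\eta)$, and then collect terms to obtain $R^{h,n}_\alpha(\bX_0\,\|\,\bX_1) - R^h_\alpha(P\,\|\,Q) \le \log\frac{1+\eta}{1-\eta}$, which rearranges to \eqref{eq:renyi_lower}. I expect the main obstacle to be precisely this last piece of bookkeeping: reconciling the coefficient $\frac{\alpha}{\alpha-1}$ that multiplies the $\log(1+\eta)$ term with the clean target $\log\frac{1+\eta}{1-\eta}$, and deciding how to split the $\beta$ budget and tune the relative-error levels between the two tails (the asymmetry between $3e^{2(\alpha-1)C}$ and $2e^{\alpha C}$ in $\underline n$ reflects the different upper- versus lower-tail Chernoff constants and the different exponents $\alpha-1$ and $\alpha$). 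Making the multiplicative concentration tight enough to land on exactly this threshold, rather than on the looser $\frac{\alpha}{\alpha-1}\log(1+\eta) - \log(1-\eta)$, is the delicate part of the argument.
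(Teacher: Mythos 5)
Your strategy coincides with the paper's proof: fix $h\in\Phi_C$, invoke Theorem~\ref{thm:variational-renyi} to get $R_\alpha(P\,\|\,Q)\ge R^h_\alpha(P\,\|\,Q)$, and control the two empirical log-moments separately via one-sided multiplicative Chernoff bounds (upper tail with constant $3$ for $\hat A$, lower tail with constant $2$ for $\hat B$), followed by a $\beta/2$ union bound. However, the step you defer as ``delicate bookkeeping'' is a genuine gap, and your third paragraph contradicts your fourth: on the intersection event $\{\hat A\le(1+\eta)A\}\cap\{\hat B\ge(1-\eta)B\}$, collecting terms gives only
\begin{equation*}
R^{h,n}_\alpha(\bX_0\,\|\,\bX_1)-R^h_\alpha(P\,\|\,Q)\;\le\;\tfrac{\alpha}{\alpha-1}\log(1+\eta)-\log(1-\eta),
\end{equation*}
which is strictly larger than $\log\tfrac{1+\eta}{1-\eta}$ for every $\alpha>1$, and no rearrangement removes the factor $\tfrac{\alpha}{\alpha-1}>1$. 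The event you would actually need to rule out is $\bigl\{\tfrac{\alpha}{\alpha-1}\log(\hat A/A)\ge\log(1+\eta)\bigr\}=\bigl\{\hat A/A\ge(1+\eta)^{(\alpha-1)/\alpha}\bigr\}$, which \emph{contains} the Chernoff event $\{\hat A/A\ge 1+\eta\}$; bounding the probability of the smaller event says nothing about the larger one. A legitimate fix is to run the upper-tail Chernoff bound at the smaller relative level $\eta'=(1+\eta)^{(\alpha-1)/\alpha}-1\approx\tfrac{\alpha-1}{\alpha}\eta$, but this inflates the first sample-size threshold by roughly $(\tfrac{\alpha}{\alpha-1})^2$ and no longer matches the stated $\underline n$; alternatively, a variance-sensitive (Bernstein-type) bound exploits $\Var(e^{(\alpha-1)h})=O((\alpha-1)^2)$ and avoids the blow-up, but that is a different argument.

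You should know that the paper's own proof stumbles at exactly the point you flagged: it notes that the Chernoff event implies $\log(\hat A/A)\ge\log(1+\eta_1)\ge\tfrac{\alpha-1}{\alpha}\log(1+\eta_1)$, rewrites this ``equivalently'' as \eqref{eq:part_one}, and then asserts that \eqref{eq:part_one} holds with probability at most $\beta/2$. That runs the implication in the wrong direction --- the event in \eqref{eq:part_one} is a superset of the Chernoff event, so its probability is not controlled by the Chernoff bound --- and hence the paper never legitimately reconciles the $\tfrac{\alpha}{\alpha-1}$ coefficient either. What your argument (and, honestly read, the paper's) delivers at the stated sample complexity is the weaker inequality $R_\alpha(P\,\|\,Q)\ge R^{h,n}_\alpha(\bX_0\,\|\,\bX_1)-\tfrac{\alpha}{\alpha-1}\log(1+\eta)+\log(1-\eta)$. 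A secondary discrepancy you glossed over: substituting $B\ge e^{-\alpha C}$ into your lower-tail requirement yields the threshold $2e^{2\alpha C}\log(2/\beta)/\eta^2$, not the stated $2e^{\alpha C}\log(2/\beta)/\eta^2$; the paper keeps $\mu_2=B$ in the denominator inside its proof and silently drops it in the theorem statement, so the second term of $\underline n$ as stated is also unjustified unless $B\ge 1$.
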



We now make two remarks about the above result. First, the larger $C$ is, the larger $\underline{n}$ and $\Phi_C$ are. Consequently, the tightness of the lower bound in \eqref{eq:renyi_lower} can be improved by increasing $C$ at the cost of an exponential increase in the sample complexity given by $\underline{n}$. Second, in contrast to well-known exponential sample complexities \cite{KKPW14} for estimating $R_\alpha(\cdot\|\cdot)$, the sample complexity in Theorem~\ref{thm:sample_complexity_renyi} is dimension-independent. This is
 primarily due to the fact that $\Phi_C$ is a relatively much smaller space compared to $\Gamma$ in the context of \eqref{eq:variational-renyi-formula} and the fact that we are only interested in obtaining a lower bound on $R_\alpha(P||Q)$. Finally, it is important to notice that, even though there is an exponential dependency on $C$, since we know the optimal solution is given by $\log(P/Q)$ we can generally set $C$ to be a small constant and still ensure a tight lower bound on the approximation error to the R\'enyi divergence.
 


To conclude, we show how the above theorem can be used to characterize the log-error in \eqref{eq:renyi_lower} in terms of $n$.

\begin{corollary}
\label{cor:renyi-error-bound}
Let $n \geq 1$ be fixed. With probability $1-\beta$, the bound in \eqref{eq:renyi_lower} holds with \begin{equation}
    \label{eq:eta_n}
    \eta = \sqrt{\frac{\max\left\{3e^{2(\alpha-1)C}, {2e^{\alpha C}} \right\} \cdot \log(2/\beta) }{n}}.
\end{equation}
\end{corollary}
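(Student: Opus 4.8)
The plan is to obtain the corollary by inverting the sample-complexity condition of Theorem~\ref{thm:sample_complexity_renyi}: rather than fixing $\eta$ and asking how large $n$ must be, I fix $n$ and solve for the value of $\eta$ that makes the hypothesis tight. First I would observe that the common factor $\log(2/\beta)/\eta^2$ can be pulled out of the maximum defining $\underline{n}$, so that
\[
\underline{n} = \frac{\log(2/\beta)}{\eta^2}\,\max\left\{3e^{2(\alpha-1)C},\,2e^{\alpha C}\right\}.
\]

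Next I would solve the inequality $n \geq \underline{n}$ for $\eta$. Multiplying through and taking square roots shows this is equivalent to $\eta \geq \eta^\star$, where $\eta^\star$ denotes the right-hand side of \eqref{eq:eta_n}. Choosing $\eta = \eta^\star$ then makes $\underline{n} = n$, so the hypothesis $n \geq \underline{n}$ holds with equality and Theorem~\ref{thm:sample_complexity_renyi} applies directly, delivering the bound \eqref{eq:renyi_lower} with probability at least $1-\beta$ for this value of $\eta$. Since $\eta \mapsto \log\frac{1+\eta}{1-\eta}$ is increasing on $(0,1)$, this smallest admissible $\eta$ also yields the tightest lower bound, which explains why it is the natural choice to state.

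The only subtlety, and the main thing to check, is the constraint $\eta \in (0,1]$ required by the theorem. I would handle this by cases: when $\eta^\star \in (0,1]$ the preceding argument applies verbatim; when $\eta^\star > 1$ (which can occur only for small $n$) the error term $\log\frac{1+\eta^\star}{1-\eta^\star}$ fails to be a finite positive number, so the asserted inequality \eqref{eq:renyi_lower} is vacuous and holds trivially. Either way the claim is valid for every fixed $n \geq 1$. I do not expect any genuine obstacle: the entire content is the algebraic inversion above, together with the observation that the stated $\eta$ is exactly the smallest one for which Theorem~\ref{thm:sample_complexity_renyi} is applicable.
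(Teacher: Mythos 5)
Your proposal is correct and matches the paper's (implicit) argument: the paper offers no separate proof of Corollary~\ref{cor:renyi-error-bound}, treating it as the immediate algebraic inversion of the sample-complexity condition in Theorem~\ref{thm:sample_complexity_renyi}, which is exactly what you do. Your additional care about the constraint $\eta \in (0,1]$ (noting the bound is vacuous when the prescribed $\eta$ exceeds $1$, i.e., for small $n$) is a subtlety the paper silently glosses over, so your write-up is if anything slightly more complete.
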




\subsubsection{R\'enyi tester}
\label{sec:function-spaces}
We now present how to use the previous results to implement a R\'enyi divergence-based tester under the framework in Algorithm~\ref{alg:general}.

Recall that the previous section showed that we can choose an arbitrary function $h \in \Phi$ to lower bound the R\'enyi divergence between the output of a mechanism in two neighboring datasets. Since we would like the tightest bound for \eqref{eq:renyi_lower}, a natural heuristic for choosing $h$ is to first sample ${\bX_0 = (X_{0,1}, \ldots, X_{0,n})}$ from $\calM(D_0)$, ${\bX_1 = (X_{1,1}, \ldots, X_{1,n})}$ from $\calM(D_1)$, and then choose $h$ to be the maximum of $R^{h,n}_\alpha(\cdot\|\cdot)$ over $\Phi_C$ (or some subspace $\Psi_C\subseteq\Phi_C$). 

Using the above choice of $h$, and appropriate choices of thresholds and sampling errors, we now present the complete R\'enyi divergence-based tester in Algorithm~\ref{alg:renyi-tester} which implements Algorithm~\ref{alg:general}.

\begin{algorithm}[tb]
  \caption{\renyitester}
  \label{alg:renyi-tester}
    \begin{algorithmic}[1]
    \STATE {\bfseries Inputs:}   \alglinelabel{algln:input} Parameters $\calM$, $T$, $\mathcal G$, and $\beta$ as in Algorithm~\ref{alg:renyi-tester}; $\Psi_C\subseteq\Phi_C$ where $\Phi_C$ is as in \eqref{eq:PhiC}; R\'enyi DP parameters $(\alpha, \epsilon)$; and sample size $n$;
   \STATE  Compute $\eta$ as in \eqref{eq:eta_n} and \[\tau:= \begin{cases}\epsilon & \text{if testing R\'enyi DP,}\\
      \min (\epsilon, 2\alpha\epsilon^2) & \text{if testing pure DP.}
      \end{cases}
    \] 
    \STATE Define the estimator
    \[
    \widehat{D}(P\,\|\,Q) := \max_{h\in\Psi_C} R^{h,n}_\alpha(\textbf{X}_0\,\|\,\textbf{X}_1) - \log\left(\frac{1+\eta}{1-\eta}\right)
    \]
    for any $(P,Q)$, where $\textbf{X}_{0} \sim P^n$ and $\textbf{X}_{1} \sim Q^n$.
    \STATE{Run the same steps as in Algorithm~\ref{alg:general} with inputs $\calM$, $T$, $\mathcal G$, $\widehat{D}$, $\tau$, and $\beta$.}
    \end{algorithmic}
\end{algorithm}

Some remarks about Algorithm~\ref{alg:renyi-tester} are in order. First, it is an instance of Algorithm~\ref{alg:general} which sets specific values for $(\tau, \widehat{D})$, while keep the other parameters in Algorithm~\ref{alg:general} as inputs. Second it is used to evaluate only pure DP or R\'enyi DP of a mechanism $\calM$. Third, its efficiency is generally to the efficiency of computing $\max_{h\in\Psi_C} R^{h,n}_\alpha(\cdot\|\cdot)$. 

In our numerical experiments, we make two concessions for the sake of tractability. First, we consider two specific choices of $\Psi_C$, namely, (i) the family of dense neural networks with bounded output and (ii) the vector space of polynomials generated by the basis of Chebyshev polynomials of degree at most $k$. Note that these particular choices of $\Psi_C$ make the optimization problem 
$\max_{h\in\Psi_C}R^{h,n}_\alpha(\cdot\|\cdot)$ tractable when applying first-order method such as stochastic gradient descent (SGD). Second, we estimate $\max_{h\in\Psi_C} R^{h,n}_\alpha(\textbf{X}_0\|\textbf{X}_1)$, by (i) independently sampling 
$\textbf{X}_0, \tilde{\textbf{X}}_0 \sim P^n$ 
and $\textbf{X}_1, \tilde{\textbf{X}}_1 \sim Q^n$, 
(ii) computing $h^* := \argmax_{h\in\Psi_C} R^{h,n}_\alpha(\textbf{X}_0\|\textbf{X}_1)$; and 
(iii) evaluating $R^{h^*,n}_\alpha(\tilde{\textbf{X}}_0\|\tilde{\textbf{X}}_1)$ as our estimate.

\subsubsection{Adaptation of R\'enyiTester to pure DP}

It is known that setting $\alpha=\infty$ for R\'enyi DP corresponds to pure-DP, i.e., $\calM$ is an $\epsilon$--DP mechanism if and only if for any ${D\sim D'}$, we have $D_\infty(\calM(D) || \calM(D')) \leq \epsilon$.  Lemma~\ref{lemma:renyi-divergence-monotone} below states the R\'enyi divergence is monotone in $\alpha$. This implies that we can use our tester with any $\alpha>1$ to test for pure DP since ${D_\infty(\calM(D) || \calM(D')) \geq D_\alpha(\calM(D) || \calM(D'))}$, so any lower bound for $D_\alpha$ will also be a lower bound for $D_\infty$.

\begin{lemma}[Proposition 9 in \cite{M17}]
\label{lemma:renyi-divergence-monotone}
Let $1<\alpha_1<\alpha_2$ and $P$ and $Q$ be probability measures. Then $D_{\alpha_1}(P||Q) < D_{\alpha_2}(P||Q)$
\end{lemma}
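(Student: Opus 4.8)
The plan is to exhibit $R_\alpha(P\,\|\,Q)$ as the slope of a chord of a convex function and then invoke the standard monotonicity of such slopes. Write $L := dP/dQ$ for the likelihood ratio, and assume $P \ll Q$ (otherwise $R_\alpha(P\,\|\,Q) = +\infty$ for every $\alpha>1$, and the non-degenerate claim below is vacuous). With this notation the definition rearranges to
\begin{equation*}
R_\alpha(P\,\|\,Q) = \frac{g(\alpha) - g(1)}{\alpha - 1}, \qquad g(\alpha) := \log \Expected{Q}{L^\alpha},
\end{equation*}
where I have used that $g(1) = \log \Expected{Q}{L} = \log P(\calX) = 0$ since $P$ is a probability measure absolutely continuous with respect to $Q$. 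Thus $R_\alpha(P\,\|\,Q)$ is precisely the slope of the chord of $g$ joining the fixed point $(1, g(1))$ to $(\alpha, g(\alpha))$.

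First I would show that $g$ is convex on $(1,\infty)$. Recognizing $g$ as the cumulant generating function of $\log L$ under $Q$, convexity follows from H\"older's inequality: for $\theta \in (0,1)$ and $\alpha = \theta \alpha_1 + (1-\theta)\alpha_2$, factoring $L^\alpha = L^{\theta\alpha_1}\,L^{(1-\theta)\alpha_2}$ and applying H\"older with conjugate exponents $1/\theta$ and $1/(1-\theta)$ gives $\Expected{Q}{L^\alpha} \le \Expected{Q}{L^{\alpha_1}}^{\theta}\,\Expected{Q}{L^{\alpha_2}}^{1-\theta}$; taking logarithms yields $g(\alpha) \le \theta g(\alpha_1) + (1-\theta) g(\alpha_2)$.

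Next I would apply the three-chords property of convex functions: for a convex $g$ and a fixed base point $a$, the difference quotient $\alpha \mapsto (g(\alpha) - g(a))/(\alpha - a)$ is non-decreasing. Taking $a = 1$ and $1 < \alpha_1 < \alpha_2$ then immediately gives $R_{\alpha_1}(P\,\|\,Q) \le R_{\alpha_2}(P\,\|\,Q)$, which is the monotonicity claim up to strictness.

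The delicate point, and the step I expect to require the most care, is upgrading $\le$ to the strict inequality stated. Equality in H\"older's inequality forces $L^{\alpha_1}$ and $L^{\alpha_2}$ to be proportional $Q$-almost surely, which for $\alpha_1 \neq \alpha_2$ means $L$ is $Q$-a.s. constant, i.e. $P = Q$; hence $g$ is \emph{strictly} convex whenever $P \neq Q$, and strict convexity promotes the chord-slope monotonicity to a strict inequality. In the degenerate case $P = Q$ one has $R_\alpha \equiv 0$ for all $\alpha$, so the inequality becomes an equality — this is the one boundary case the statement tacitly excludes, and I would flag it (together with the $P \not\ll Q$ case, where both sides equal $+\infty$) explicitly. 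The monotonicity itself is routine once convexity is established; the real work is tracking the H\"older equality conditions precisely enough to justify strictness in exactly the non-degenerate regime.
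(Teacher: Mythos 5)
Your chord-slope argument for the non-strict inequality is correct, and it is the standard route: note that the paper itself gives no proof of this lemma (it is quoted from \cite{M17}), so your convexity-of-the-cumulant-generating-function derivation is the right thing to compare against the literature, where essentially the same argument appears. The genuine gap is in your strictness step, and it matters because the strict inequality you set out to prove is in fact false as stated. Equality in your application of H\"older forces $L^{\alpha_1}$ and $L^{\alpha_2}$ to be proportional $Q$-almost surely; for $\alpha_1 \neq \alpha_2$ this means $L$ is $Q$-a.s.\ constant \emph{on the set} $\{L>0\}$, not constant on all of $\calX$, because proportionality carries no information on the set where $L$ vanishes. These are genuinely different conclusions. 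Take any event $A$ with $0 < Q(A) < 1$ and let $P(\cdot) = Q(\cdot \cap A)/Q(A)$, so that $L = \ind_{A}/Q(A)$. Then
\begin{equation*}
\Expected{Q}{L^{\alpha}} = Q(A)^{1-\alpha},
\qquad
D_{\alpha}(P\,\|\,Q) = \frac{1}{\alpha-1}\log Q(A)^{1-\alpha} = \log\frac{1}{Q(A)}
\quad \text{for every } \alpha>1,
\end{equation*}
so $g$ is affine, the divergence is constant in $\alpha$, and yet $P \neq Q$. Hence your claim that $g$ is strictly convex whenever $P \neq Q$ is wrong, and no amount of bookkeeping of H\"older equality cases can rescue the strict inequality for such pairs.

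What is true, and what your proof does establish, is the non-strict monotonicity $D_{\alpha_1}(P\,\|\,Q) \le D_{\alpha_2}(P\,\|\,Q)$; whatever the exact phrasing in \cite{M17}, only this non-strict version can hold in general, and it is all the surrounding paper actually needs (a lower bound on $D_\alpha$ remains a lower bound on $D_\infty$ under non-strict monotonicity). If you want the exact equality characterization: for $1<\alpha_1<\alpha_2$, equality holds if and only if either both divergences are $+\infty$, or $dP/dQ$ is $Q$-a.s.\ constant on $\{dP/dQ>0\}$ (equivalently, $P = Q(\cdot \mid A)$ for some event $A$). The two degenerate cases you flagged, $P=Q$ and $P \not\ll Q$, are instances of these alternatives but do not exhaust them, which is exactly where your argument breaks.
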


In the specific case of testing pure DP, Lemma~\ref{lem:renyi_suff_cond} gives a bound for the divergence that can be used to obtain a tighter threshold for the test (see line~\ref{algln:threshold}).  

\begin{lemma}[Lemma 1 in \cite{M17}]
\label{lem:renyi_suff_cond}
Let $\calM$ be an $\epsilon$--differentially private mechanism and $\alpha>1$. Then $D_{\alpha}(\calM(D)||\calM(D'))\leq \min\{\epsilon, 2 \alpha \epsilon^2\}$. 
\end{lemma}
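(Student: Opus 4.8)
The plan is to work with the privacy loss random variable and reduce both bounds to elementary statements about its moment generating function. Write $P = \calM(D)$ and $Q = \calM(D')$ for a neighboring pair, and set $L(x) := \log \tfrac{dP}{dQ}(x)$. Pure $\epsilon$-DP is equivalent to $D_\infty(P\|Q)\le\epsilon$ in both directions, i.e. $e^{-\epsilon}\le \tfrac{dP}{dQ}\le e^{\epsilon}$, which is exactly the pointwise bound $|L(x)|\le\epsilon$ for $Q$-almost every $x$. The only other fact I need is the normalization $\Expected{Q}{e^{L}} = \int \tfrac{dP}{dQ}\,dQ = 1$. Rewriting Definition~\ref{def:renyi-divergence} in terms of $L$ gives the compact form $D_\alpha(P\|Q) = \tfrac{1}{\alpha-1}\log \Expected{Q}{e^{\alpha L}}$, so the whole problem reduces to controlling $\Expected{Q}{e^{\alpha L}}$ under the two constraints $|L|\le\epsilon$ and $\Expected{Q}{e^{L}}=1$.

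For the bound $D_\alpha\le\epsilon$, I would factor $e^{\alpha L} = e^{(\alpha-1)L}e^{L}\le e^{(\alpha-1)\epsilon}e^{L}$ using $L\le\epsilon$, take expectations, and invoke the normalization to get $\Expected{Q}{e^{\alpha L}}\le e^{(\alpha-1)\epsilon}$; taking logs and dividing by $\alpha-1$ closes this case. The bound $D_\alpha\le 2\alpha\epsilon^2$ is the substantive one. I would first dispose of the easy regime: if $\alpha\epsilon\ge 1/2$ then $2\alpha\epsilon^2\ge\epsilon$, so the claim already follows from the first bound, leaving only $\alpha\epsilon\le 1/2$ to treat directly. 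In that regime I would apply $\log t\le t-1$ and subtract the vanishing quantity $\alpha(\Expected{Q}{e^{L}}-1)=0$ to obtain
\[
\log \Expected{Q}{e^{\alpha L}} \le \Expected{Q}{e^{\alpha L} - 1 - \alpha(e^{L}-1)} = \Expected{Q}{g(L)},
\]
where $g(\ell) := e^{\alpha\ell} - 1 - \alpha(e^{\ell}-1)$.

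The remaining work is a pointwise analysis of $g$ on $[-\epsilon,\epsilon]$. Since $g'(\ell) = \alpha(e^{\alpha\ell}-e^{\ell})$ is negative on $(-\epsilon,0)$ and positive on $(0,\epsilon)$, $g$ attains its maximum at an endpoint, and the star-shapedness of $\sinh$ (equivalently, monotonicity of $t\mapsto \sinh(t)/t$) gives $g(\epsilon)\ge g(-\epsilon)$; hence $g(L)\le g(\epsilon)$ almost surely and $\log\Expected{Q}{e^{\alpha L}}\le g(\epsilon)$. It then suffices to prove the elementary inequality $g(\epsilon)\le 2\alpha(\alpha-1)\epsilon^2$ for $\alpha\epsilon\le 1/2$, after which dividing by $\alpha-1$ yields $D_\alpha\le 2\alpha\epsilon^2$. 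I would establish this by expanding $g(\epsilon)=\sum_{k\ge 2}\tfrac{\alpha^k-\alpha}{k!}\epsilon^k$, factoring $\alpha^k-\alpha = \alpha(\alpha-1)\sum_{j=0}^{k-2}\alpha^{j}$, and bounding $\sum_{j=0}^{k-2}\alpha^j\le (k-1)\alpha^{k-2}$ to reduce to checking $\sum_{k\ge2}\tfrac{(k-1)(\alpha\epsilon)^{k-2}}{k!}\le 2$, which holds comfortably for $\alpha\epsilon\le 1/2$.

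I expect the main obstacle to be this last elementary inequality and, in particular, the realization that it cannot hold for all $\epsilon$: the series defining $g$ grows too fast once $\alpha\epsilon$ is large, so the case split at $\alpha\epsilon=1/2$ (precisely where the $\min$ switches to the $\epsilon$ branch) is essential rather than cosmetic. A secondary subtlety is justifying the reduction $\Expected{Q}{g(L)}\le g(\epsilon)$, i.e. that the maximum of $g$ over $[-\epsilon,\epsilon]$ sits at $+\epsilon$ rather than $-\epsilon$; this is exactly where the convexity and star-shapedness of the hyperbolic functions enter.
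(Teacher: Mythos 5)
Your proof is correct, and every step checks out: the reduction to the privacy-loss variable $L=\log\tfrac{dP}{dQ}$ with $|L|\le\epsilon$ ($Q$-a.e.) and $\mathbb{E}_Q[e^{L}]=1$; the bound $D_\alpha\le\epsilon$ via $e^{\alpha L}=e^{(\alpha-1)L}e^{L}\le e^{(\alpha-1)\epsilon}e^{L}$; the centering trick $\log\mathbb{E}_Q[e^{\alpha L}]\le\mathbb{E}_Q\bigl[e^{\alpha L}-1-\alpha(e^{L}-1)\bigr]=\mathbb{E}_Q[g(L)]$; the endpoint analysis $g(L)\le g(\epsilon)$ (sign of $g'$ together with $\sinh(\alpha\epsilon)\ge\alpha\sinh(\epsilon)$); and the series estimate, where your remaining check $\sum_{k\ge2}(k-1)x^{k-2}/k!\le 2$ for $x=\alpha\epsilon\le 1/2$ follows at once from $(k-1)/k!\le 1/(k-1)!$, giving $\sum_{k\ge2}(k-1)x^{k-2}/k!\le(e^{x}-1)/x\le 2e^{1/2}-2<2$. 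Be aware, however, that the paper contains no proof of this statement: it is imported verbatim as Lemma 1 of \cite{M17}, so the comparison is with the literature rather than with anything in the appendix. The standard route there is genuinely different from yours: one applies Hoeffding's lemma to the bounded privacy loss under $P$, first deducing $\mathbb{E}_P[L]\le\epsilon^2/2$ from $\mathbb{E}_P[e^{-L}]=1$, and then bounding the moment generating function $\mathbb{E}_P[e^{(\alpha-1)L}]$, which yields the stronger, unconditional bound $D_\alpha\le\alpha\epsilon^2/2$ for every $\alpha>1$ (Bun and Steinke's result that pure $\epsilon$-DP implies $\tfrac12\epsilon^2$-zCDP). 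Your argument buys elementarity --- nothing beyond $\log t\le t-1$, the normalization $\mathbb{E}_Q[e^{L}]=1$, and a power series --- at the price of a worse constant and a case split.

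One small correction to your closing commentary: the split at $\alpha\epsilon=1/2$ is convenient but not ``essential'' in the sense you claim. Writing $S(x)=\sum_{k\ge2}(k-1)x^{k-2}/k!=\bigl((x-1)e^{x}+1\bigr)/x^{2}$, your own chain of inequalities survives up to $x\approx1.9$, well past the point where the $\min$ switches branches; and the conclusion $D_\alpha\le2\alpha\epsilon^2$ is in fact true for \emph{all} $\alpha,\epsilon$, being weaker than the $\alpha\epsilon^2/2$ bound above. What is true is that the pointwise inequality $g(\epsilon)\le2\alpha(\alpha-1)\epsilon^2$ must eventually fail (exponential versus quadratic growth), so your particular proof needs some cutoff; but its location is untethered from $\alpha\epsilon=1/2$, and choosing that value is a harmless convenience rather than a forced move.
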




\subsection{Approximate DP}

This section presents testers that check if a mechanism satisfies a given level of approximate DP. Similar to the previous section, each tester implements the framework in Algorithm~\ref{alg:general}.
\subsubsection{Histogram Tester}
The authors in paper \cite{GM18} introduce a test for approximate differential privacy over discrete and finite output spaces. We introduce this through Algorithm~\ref{alg:gilbert_divergence} and Algorithm~\ref{alg:gilbert_test}. It requires knowledge of the output space. The test estimates the parameter $\delta$ by directly computing the privacy loss over different outputs.

\begin{algorithm}
  \caption{Estimate histogram divergence}
  \label{alg:gilbert_divergence}
\begin{algorithmic}[1]
  \STATE {\bfseries Input:} Distributions $P$ and $Q$; universe $[m]$; $\epsilon>0$; $\beta>0$; and $\eta > 0$;
  \STATE Compute $\lambda = \max \bigl\{4m(1+e^{2\epsilon})\beta^{-2}, 12(1+e^{2\epsilon})\eta^{-2} \bigr\} $
  \STATE Sample $r \sim \text{Poisson}(\lambda)$
  \STATE Generate samples $X \sim P^r$ and $Y \sim Q^r$
  \FOR{$j=1$ {\bfseries to} $m$}
        \STATE $x_j= |\{i : X_i = j, 1\leq i\leq r\}|$
        \STATE $y_j=|\{i : Y_i = j, 1\leq i\leq r\}|$
    \alglinelabel{algln:adp_probability} \STATE $z_i = (x_i-e^{\epsilon}y_i) / r$
  \ENDFOR
  \RETURN $-\eta + \sum^r_{i=1} \max\{0,z_i\}$
\end{algorithmic}
\end{algorithm}

\begin{algorithm}
  \caption{\gilberttester}
  \label{alg:gilbert_test}
\begin{algorithmic}[1]
  \STATE {\bfseries Input:} Parameters $\calM$, $T$, $\mathcal G$, and $\beta$ as in Algorithm~\ref{alg:general}; parameters $\eta$ as in Algorithm~\ref{alg:gilbert_divergence}; approximate DP parameters $(\epsilon,\delta)$; and sample size $m$;
  \STATE Define the estimator $\widehat{D}(P\,\|\,Q)$ to be the output of Algorithm~\ref{alg:gilbert_divergence} with inputs $P$, $Q$, $[m]$, $\epsilon$, $\beta$, and $\eta$.
    \STATE Set $\tau = \delta$.
    \STATE{Run the same steps as in Algorithm~\ref{alg:general} with inputs $\calM$, $T$, $\mathcal G$, $\widehat{D}$, $\tau$, and $\beta$.}
\end{algorithmic}
\end{algorithm}

\subsubsection{Hockey-stick divergence}
\label{sec:hs_div}
It is well known that DP can be defined in terms of the Hockey stick divergence (see Lemma 5 in \cite{ZDW22}).
\begin{lemma}{(\cite{BO13})}
Consider the function $f(x) = \max\{0, x-e^{\epsilon} \}$. The Hockey-Stick divergence between $P$ and $Q$ (of order $e^{\epsilon}$) is the $f$-divergence defined as
\begin{equation}
    \label{eq:hockey-stick}
    H_{\epsilon}(P||Q):= \Expected{Q}{f\left(\frac{dP}{dQ} \right)}.
\end{equation}

A mechanism is $(\epsilon, \delta)$--DP if and only if 
\begin{equation*}
\label{eq:dp-as-hockey-stick}
    \sup_{D_0\sim D_1}H_\epsilon(\mathcal{M}(D_0)||\mathcal{M}(D_1)) \leq \delta
\end{equation*}
where the supremum is taken over neighboring datasets.
\end{lemma}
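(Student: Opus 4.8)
The plan is to reduce the stated equivalence to a single variational identity for the Hockey-Stick divergence, namely
\begin{equation}
H_\epsilon(P\,\|\,Q) = \sup_{A \subseteq \calX} \left( P(A) - e^\epsilon Q(A) \right), \label{eq:hs-variational}
\end{equation}
where the supremum ranges over measurable subsets $A$ and $P(A)$ denotes the $P$-measure of $A$. Once \eqref{eq:hs-variational} is in hand, both directions of the ``if and only if'' follow essentially by inspection, so the technical heart of the argument is establishing this identity.

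To prove \eqref{eq:hs-variational}, I would first assume $P \ll Q$ so that the Radon--Nikodym derivative $dP/dQ$ exists, and write, for any measurable $A$,
\begin{equation*}
P(A) - e^\epsilon Q(A) = \int_A \left( \frac{dP}{dQ} - e^\epsilon \right) dQ.
\end{equation*}
The integrand is positive exactly on the set $A^\star := \{x : (dP/dQ)(x) > e^\epsilon\}$, so among all measurable sets the choice $A = A^\star$ maximizes the right-hand side: any point outside $A^\star$ contributes a nonpositive amount, while any point of $A^\star$ that is omitted forfeits a positive amount. Evaluating at $A^\star$ then yields
\begin{equation*}
\sup_{A} \left( P(A) - e^\epsilon Q(A) \right) = \int \max\left\{0, \tfrac{dP}{dQ} - e^\epsilon\right\} dQ = \Expected{Q}{f\left(\tfrac{dP}{dQ}\right)} = H_\epsilon(P\,\|\,Q),
\end{equation*}
which is \eqref{eq:hs-variational}.

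With the identity established, the equivalence is immediate. By Definition~\ref{def:dp}, $\calM$ is $(\epsilon,\delta)$--DP iff for every pair $D_0 \sim D_1$ and every measurable $A$ we have $P(\calM(D_0) \in A) - e^\epsilon P(\calM(D_1) \in A) \leq \delta$. Taking the supremum over $A$ and invoking \eqref{eq:hs-variational} with $P = \calM(D_0)$ and $Q = \calM(D_1)$ turns this into $H_\epsilon(\calM(D_0)\,\|\,\calM(D_1)) \leq \delta$ for every neighboring pair, i.e.\ $\sup_{D_0 \sim D_1} H_\epsilon(\calM(D_0)\,\|\,\calM(D_1)) \leq \delta$. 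Conversely, if this supremum bound holds, then for each fixed neighboring pair and each measurable $A$ the chain $P(\calM(D_0) \in A) - e^\epsilon P(\calM(D_1) \in A) \leq H_\epsilon(\calM(D_0)\,\|\,\calM(D_1)) \leq \delta$ recovers \eqref{eq:dp} exactly. Symmetry of $\sim$ ensures both orderings of each pair are covered.

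The main obstacle is the measure-theoretic bookkeeping in \eqref{eq:hs-variational} when $P$ is not absolutely continuous with respect to $Q$: in that case $dP/dQ$ must be read through the Lebesgue decomposition $P = P_{ac} + P_{sing}$, and the singular part $P_{sing}$ contributes its full mass to the supremum (attained by enlarging $A^\star$ to include a $Q$-null set carrying $P_{sing}$). Checking that the definition of $H_\epsilon$ and the identity \eqref{eq:hs-variational} remain consistent under this decomposition, and that the optimizing set $A^\star$ is measurable, is the only genuinely delicate point; everything else is routine.
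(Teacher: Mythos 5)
Your proof is correct, but there is no proof in the paper to compare it against: the paper states this lemma as a known result imported from \cite{BO13} and never proves it. Judged on its own, your argument is sound. For $P \ll Q$ the quantity $P(A)-e^{\epsilon}Q(A)=\int_{A}\bigl(\tfrac{dP}{dQ}-e^{\epsilon}\bigr)\,dQ$ is maximized over measurable $A$ by the (measurable) super-level set $A^{\star}=\bigl\{\tfrac{dP}{dQ}>e^{\epsilon}\bigr\}$, where it equals $\Expected{Q}{f\bigl(\tfrac{dP}{dQ}\bigr)}=H_{\epsilon}(P\,\|\,Q)$; rewriting Definition~\ref{def:dp} as $\sup_{D_0\sim D_1}\sup_{A}\bigl(P(\calM(D_0)\in A)-e^{\epsilon}P(\calM(D_1)\in A)\bigr)\le\delta$ and invoking this identity gives both directions at once, with symmetry of the add/remove relation covering the two orderings of each pair. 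Your caveat about $P\not\ll Q$ is worth flagging (for $\delta>0$ the two output distributions genuinely need not be mutually absolutely continuous, and one needs the convention that the singular part contributes its full mass, since $\lim_{x\to\infty}f(x)/x=1$), though relative to the paper's own conventions it is moot: the appendix definition of $f$-divergence (Definition~\ref{def:HS}) assumes absolute continuity. As for how your route relates to what the paper does prove nearby: the lemma immediately following this one derives the function-level variational form \eqref{eq:hs-variational}, $H_{\epsilon}(P\|Q)=\sup_{0\le g\le1}\Expected{P}{g(X)}-e^{\epsilon}\Expected{Q}{g(X)}$, via the convex conjugate $f^{*}$, and Lemma~\ref{lem:classification} then shows, by a Hahn-decomposition argument (Proposition~\ref{prop:signedmeasure}), that this supremum is attained at indicator functions. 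Your set-level identity is precisely the restriction of \eqref{eq:hs-variational} to indicators, proved directly by pointwise optimization, so you reach the DP equivalence with strictly less machinery (no conjugate duality, no Hahn decomposition); what the paper's heavier apparatus buys is not this lemma but its estimator, since optimizing over $[0,1]$-valued functions (classifiers) rather than over sets is what makes Algorithm~\ref{alg:hs_divergence} implementable.
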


\begin{lemma}
The variational representation for the Hockey stick divergence (see \cref{eq:var-f-divergence} in the supplementary material) is given by

\begin{equation}
\label{eq:hs-variational}
H_{\epsilon}(P||Q) = \sup_{g:0\leq g\leq 1}\Expected{P}{g(X)}-\Expected{Q}{e^{\epsilon}g(X))}.
\end{equation}
\end{lemma}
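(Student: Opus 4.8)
The plan is to recognize $H_{\epsilon}$ as the $f$-divergence associated with the convex generator $f(x) = \max\{0, x - e^{\epsilon}\}$ (which indeed satisfies $f(1)=0$ since $e^{\epsilon}\ge 1$) and to derive \eqref{eq:hs-variational} by a direct pointwise maximization over the admissible functions $g$, rather than grinding through the general Fenchel-duality formula for $f$-divergences (although that route works too and I will use it for a consistency check). The starting observation is that the right-hand side of \eqref{eq:hs-variational} collapses to a single expectation against $Q$ via a change of measure.

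First I would assume $P \ll Q$ (the singular case is deferred to the end) and write $r := dP/dQ$. For any measurable $g$ with $0 \le g \le 1$, the change-of-measure identity $\Expected{P}{g(X)} = \Expected{Q}{g(X)\, r(X)}$ gives
\begin{equation}
\Expected{P}{g(X)} - \Expected{Q}{e^{\epsilon} g(X)} = \Expected{Q}{g(X)\left(r(X) - e^{\epsilon}\right)},
\end{equation}
so the supremum in \eqref{eq:hs-variational} equals $\sup_{0 \le g \le 1} \Expected{Q}{g(X)(r(X) - e^{\epsilon})}$, which is now amenable to a pointwise analysis. The key step is to maximize the integrand pointwise: for $Q$-almost every $x$, the scalar map $t \mapsto t\,(r(x) - e^{\epsilon})$ on $t \in [0,1]$ is maximized at $t=1$ when $r(x) > e^{\epsilon}$ and at $t=0$ otherwise, with maximal value $\max\{0, r(x) - e^{\epsilon}\} = (r(x) - e^{\epsilon})_+$. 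The maximizer $g^{*}(x) = \ind[r(x) > e^{\epsilon}]$ is measurable and takes values in $[0,1]$, hence is an admissible competitor attaining the pointwise optimum simultaneously for every $x$. Integrating against $Q$ and invoking the definition \eqref{eq:hockey-stick} then yields
\begin{equation}
\sup_{0 \le g \le 1} \Expected{Q}{g(X)(r(X) - e^{\epsilon})} = \Expected{Q}{(r(X) - e^{\epsilon})_+} = \Expected{Q}{f(r(X))} = H_{\epsilon}(P \,\|\, Q),
\end{equation}
which is precisely \eqref{eq:hs-variational}.

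The main subtlety — and the only place that requires genuine care — is measure-theoretic rather than combinatorial. The interchange of supremum and integral is justified because $g^{*}$ is an explicit measurable selection attaining the pointwise maximum $Q$-a.e.: the upper bound $\Expected{Q}{g(X)(r(X)-e^{\epsilon})} \le \Expected{Q}{(r(X)-e^{\epsilon})_+}$ holds for every admissible $g$ from the pointwise inequality $g\,(r-e^{\epsilon}) \le (r - e^{\epsilon})_+$, and $g^{*}$ shows it is tight. When $P \not\ll Q$ I would invoke the Lebesgue decomposition $P = P_{ac} + P_{s}$ with $P_{ac} \ll Q$ and $P_{s} \perp Q$: the divergence picks up the extra term $f'(\infty)\,P_{s}(\calX) = P_{s}(\calX)$ since $\lim_{x\to\infty} f(x)/x = 1$, while on the variational side one sets $g = 1$ on the $Q$-null set carrying $P_{s}$, contributing $P_{s}(\calX)$ to the $P$-term and nothing to the $Q$-term, so both sides still agree.

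Finally, to confirm consistency with the abstract $f$-divergence representation \cref{eq:var-f-divergence}, I would compute the convex conjugate of $f$, obtaining $f^{*}(y) = e^{\epsilon} y$ for $y \in [0,1]$, $f^{*}(y)=0$ for $y<0$, and $f^{*}(y)=+\infty$ for $y>1$. Substituting into the general formula reproduces the constraint $g \le 1$ and a $Q$-integrand $e^{\epsilon}\max\{g,0\}$; observing that replacing any $g$ by $\max\{g,0\}$ can only increase the $P$-term while leaving the $Q$-term unchanged shows the optimal $g$ may be taken nonnegative, which removes the positive part and produces the clean constraint $0 \le g \le 1$ of \eqref{eq:hs-variational}.
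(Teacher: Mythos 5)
Your proof is correct, but it takes a genuinely different route from the paper's. The paper proves the lemma by computing the convex conjugate of the hockey-stick generator $f(x)=\max\{0,x-e^{\epsilon}\}$, case by case in $y$, obtaining $f^*(y)=e^{\epsilon}y$ on $[0,1]$ and $+\infty$ outside, and then substituting into the general variational representation of $f$-divergences \eqref{eq:var-f-divergence}, which it cites from the literature without proof. You instead argue directly: a change of measure rewrites the objective as $\Expected{Q}{g(X)\,(r(X)-e^{\epsilon})}$ with $r=dP/dQ$, pointwise maximization over $g(x)\in[0,1]$ gives the bound $\Expected{Q}{(r(X)-e^{\epsilon})_+}$, and the explicit measurable maximizer $g^{*}=\ind_{\{r>e^{\epsilon}\}}$ certifies attainment. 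Your route buys three things: it is self-contained (no appeal to the abstract duality theorem); it exhibits the optimizer as an indicator of a likelihood-ratio level set, which is precisely the structure the paper re-derives later via Hahn decomposition in the proof of Lemma~\ref{lem:classification}; and it covers $P\not\ll Q$ through Lebesgue decomposition, a case the paper's Definition~\ref{def:HS} excludes by assumption. The paper's route, in exchange, is shorter given the cited theorem and directly justifies the lemma's reference to \eqref{eq:var-f-divergence}. One cosmetic discrepancy worth noting: your conjugate ($f^*(y)=0$ for $y<0$, from taking the supremum over $x\ge 0$) differs from the paper's ($f^*(y)=\infty$ for $y<0$, supremum over all of $\mathbb{R}$), but you correctly reconcile the two conventions by observing that replacing $g$ with $\max\{g,0\}$ can only increase the objective, so both yield the same supremum.
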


Similar to the R\'enyi tester, the Hockey-stick tester is based on the dual formulation given in \eqref{eq:hockey-stick}. In this section, we further manipulate the expression for the Hockey-stick divergence and establish a connection between that divergence and an unbalanced binary classification problem. 
\begin{lemma}
 \label{lem:classification} 
 Let $\epsilon > 0$ Fix two distributions $P$ and $Q$ over $\calX$. Let $\mathbb{D}$ be a mixture distribution over $\calX \times {0, 1}$ defined as $\mathbb{D} = \frac{e^\epsilon}{1 + e^\epsilon} Q \times \delta_0 + \frac{1}{1 + e^\epsilon} P \times \delta_1$. That is, $(X, Y) \sim \mathbb{D}$ is a weighted, labeled sample from $Q$ (with label $0$) and  $P$ (with label $1$).
 $$H_\epsilon(P || Q) = (1 + e^\epsilon) \sup_{g \colon \calX \to \{0,1\}} \mathbb{D}(g(X) = Y) - e^\epsilon$$
 \end{lemma}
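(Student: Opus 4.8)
The plan is to begin from the variational representation of the Hockey-stick divergence in \eqref{eq:hs-variational}, namely
\[
H_\epsilon(P\|Q) = \sup_{g\colon 0\le g\le 1} \Expected{P}{g(X)} - e^\epsilon\Expected{Q}{g(X)},
\]
and to show that (i) the supremum may be restricted to $\{0,1\}$-valued functions without changing its value, and (ii) for such functions the objective equals exactly $(1+e^\epsilon)\,\mathbb{D}(g(X)=Y) - e^\epsilon$. Combining these two facts and taking the supremum yields the claim.

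For step (i), I would observe that the map $g\mapsto \Expected{P}{g(X)} - e^\epsilon\Expected{Q}{g(X)}$ is linear in $g$, and that the constraint set $\{g : 0\le g\le 1\}$ is convex with extreme points given exactly by the $\{0,1\}$-valued functions. Concretely, writing the objective as $\int g\,(dP - e^\epsilon\,dQ)$ and maximizing pointwise shows that the optimum is attained at the indicator $g^\star = \ind\{dP/dQ > e^\epsilon\}$, which is $\{0,1\}$-valued; hence $\sup_{0\le g\le 1}$ and $\sup_{g\colon\calX\to\{0,1\}}$ agree for this objective.

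For step (ii), I would fix a classifier $g\colon\calX\to\{0,1\}$ and decompose the accuracy event by conditioning on the label $Y$. Using the mixture structure $\mathbb{D} = \tfrac{e^\epsilon}{1+e^\epsilon}\,Q\times\delta_0 + \tfrac{1}{1+e^\epsilon}\,P\times\delta_1$ together with the identities $\ind\{g(X)=1\}=g(X)$ and $\ind\{g(X)=0\}=1-g(X)$ valid for binary $g$, one obtains
\[
\mathbb{D}(g(X)=Y) = \tfrac{1}{1+e^\epsilon}\Expected{P}{g(X)} + \tfrac{e^\epsilon}{1+e^\epsilon}\bigl(1 - \Expected{Q}{g(X)}\bigr).
\]
Multiplying by $(1+e^\epsilon)$ and subtracting $e^\epsilon$ collapses the right-hand side to $\Expected{P}{g(X)} - e^\epsilon\Expected{Q}{g(X)}$, which is precisely the variational objective. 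Taking the supremum over $\{0,1\}$-valued $g$ on both sides and invoking step (i) finishes the proof.

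The computation in step (ii) is routine bookkeeping, so the only conceptual point is step (i): justifying that nothing is lost by restricting from soft $[0,1]$-valued functions to hard $\{0,1\}$-valued classifiers. This is the expected main obstacle, although it reduces to the standard fact that a linear functional over a convex set is optimized at an extreme point; in the measure-theoretic setting I would make this rigorous through the pointwise maximization above, breaking ties (where $dP/dQ = e^\epsilon$) by setting $g^\star = 0$, which leaves the value unchanged.
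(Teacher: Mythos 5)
Your proof is correct, and its two-step skeleton is the same as the paper's: (i) the supremum in the variational representation \eqref{eq:hs-variational} can be restricted to $\{0,1\}$-valued functions, and (ii) for binary $g$ one has $(1+e^\epsilon)\,\mathbb{D}(g(X)=Y) - e^\epsilon = \Expected{P}{g(X)} - e^\epsilon\,\Expected{Q}{g(X)}$; your step (ii) is essentially verbatim the paper's computation. The genuine difference is how step (i) is justified. The paper introduces the signed measure $\nu = P - e^\epsilon Q$, invokes Hahn's decomposition theorem to obtain its maximal positive set $A$, and proves a standalone result (Proposition~\ref{prop:signedmeasure}, via simple functions and dominated convergence) showing $\E_\nu[g(X)] \le \nu(A)$ for all $0 \le g \le 1$, so the indicator of $A$ is optimal. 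You instead maximize pointwise against the likelihood ratio: writing the objective as $\int g\,\bigl(\tfrac{dP}{dQ} - e^\epsilon\bigr)\,dQ$, the integrand is nonpositive off the set $\{dP/dQ > e^\epsilon\}$ and at most $\tfrac{dP}{dQ} - e^\epsilon$ on it, so $g^\star = \ind\{dP/dQ > e^\epsilon\}$ attains the supremum, whose value is $H_\epsilon(P\|Q)$ by definition. Your route is shorter and dispenses with the auxiliary proposition entirely, at the price of assuming $P \ll Q$ so that $dP/dQ$ exists; the paper's Hahn-decomposition argument needs no density, although the paper's own definition of $H_\epsilon$ as an $f$-divergence (Eq.~\eqref{eq:hockey-stick}) already presupposes absolute continuity, so nothing is lost in the setting of the lemma (indeed $A$ coincides with $\{dP/dQ > e^\epsilon\}$ up to $\nu$-null sets). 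One caution: the abstract principle you gesture at --- a linear functional on a convex set is maximized at an extreme point --- requires compactness and semicontinuity hypotheses (Bauer's maximum principle) that are not verified here; but, as you yourself note, your concrete pointwise argument carries the full weight of the proof, so nothing rests on that remark.
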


The advantage of this formulation is that it allows for a simple derivation on how well we can approximate the true divergence with a sample.
\begin{definition}
\label{eq:lb_est}
Let $S_m$ denote a binomial random variable of parameters $m, p$. Let $\gamma_m, \beta > 0$ We say that $\widehat p \colon= \widehat p (S_n)$ is a $(\gamma_m, \beta)$ lower bound estimator for $p$ if with probability at least $1 - \beta$ over the draw of $S_n$ we have that $p > \widehat p(S_m) - \gamma_m$.
\end{definition}
Note that there are multiple tools to derive lower bounds for $p$. Just to mention a few we have Hoeffding's inequality, Chernoff bounds and Clopper-Pearson confidence intervals. The following well known proposition uses the well known Hoeffding's inequality to instantiate $\gamma_m$.
\begin{proposition}
Let $\beta > 0$ and  $\widehat{p}(S_m) = S_m/m$, then with probability at least $1- \beta$: 
\begin{equation*}
    p \geq p(S_m) - \sqrt{\frac{\log(1/\beta)}{2m}}.
\end{equation*}
\end{proposition}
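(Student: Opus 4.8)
The plan is to recognize this as a direct, one-sided application of Hoeffding's inequality, since $S_m \sim \mathrm{Binomial}(m,p)$ can be written as $S_m = \sum_{i=1}^m X_i$ with $X_1,\dots,X_m$ i.i.d.\ Bernoulli$(p)$ random variables taking values in $[0,1]$. The estimator $\widehat{p}(S_m) = S_m/m$ is then the empirical mean of these bounded variables, with $\mathbb{E}[\widehat{p}] = p$. The desired conclusion $p \geq \widehat{p}(S_m) - \gamma_m$ is equivalent to controlling the event that $\widehat{p}$ \emph{over}estimates $p$ by more than $\gamma_m$, i.e.\ bounding $\mathbb{P}(\widehat{p} - p \geq \gamma_m)$, so only the upper-tail version of Hoeffding is needed.

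First I would invoke the one-sided Hoeffding bound for a mean of independent $[0,1]$-valued variables, which gives, for any $t > 0$,
\[
\mathbb{P}\!\left(\widehat{p}(S_m) - p \geq t\right) \leq \exp\!\left(-2m t^2\right).
\]
Next I would choose $t = \gamma_m$ so that the right-hand side equals the prescribed failure probability $\beta$; setting $\exp(-2m\gamma_m^2) = \beta$ and solving yields $\gamma_m = \sqrt{\log(1/\beta)/(2m)}$. Taking complements, with probability at least $1-\beta$ we have $\widehat{p}(S_m) - p < \gamma_m$, and rearranging gives $p > \widehat{p}(S_m) - \sqrt{\log(1/\beta)/(2m)}$, which is the claimed bound (matching Definition~\ref{eq:lb_est} with this value of $\gamma_m$).

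There is essentially no hard step here: the argument is a textbook concentration estimate, and the only point requiring a moment of care is the \emph{direction} of the inequality. Because the proposition asserts a valid \emph{lower} bound on $p$, we must bound the probability that the empirical frequency exceeds $p$ (not that it falls below), which is exactly the upper tail of $\widehat{p}$. Using the one-sided rather than the two-sided form is also what lets the threshold carry $\log(1/\beta)$ rather than $\log(2/\beta)$, matching the stated constant.
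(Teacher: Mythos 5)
Your proof is correct and follows exactly the route the paper intends: the paper states this proposition without proof as a ``well known'' consequence of Hoeffding's inequality, and your argument is precisely that one-sided Hoeffding application (viewing $S_m/m$ as the empirical mean of i.i.d.\ Bernoulli$(p)$ variables, bounding the upper tail, and solving $e^{-2m\gamma_m^2}=\beta$). Your observation that the one-sided form is what yields $\log(1/\beta)$ instead of $\log(2/\beta)$ is also exactly right.
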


\begin{proposition}
Let $g \colon \calX \to \{0,1\}$ be any function and $\beta > 0$. Let $(x_i, y_i)_{i=1}^m \sim \mathbb{D}$ be an i.i.d sample. Let $S_m = \sum_{i=1}^m \ind_{g(X_i) = Y_i}$. If $\widehat p$ is a $(\gamma_m, \beta)$ lower bound estimator, then with probability at least $1- \beta$ we have
$$H_\epsilon (P || Q) \geq (1 + e^\epsilon ) (\widehat p(S_m) - \gamma_m) - e^\epsilon.$$
\end{proposition}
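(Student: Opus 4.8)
The plan is to combine the variational identity from Lemma~\ref{lem:classification} with the defining property of a lower bound estimator; no new concentration inequality is needed because all the probabilistic content is already packaged inside Definition~\ref{eq:lb_est}. The key observation is that Lemma~\ref{lem:classification} expresses $H_\epsilon(P\|Q)$ as a supremum over all binary classifiers, so for the particular fixed $g$ at hand one immediately obtains the deterministic bound
\[
H_\epsilon(P\|Q) \geq (1+e^\epsilon)\,\mathbb{D}(g(X)=Y) - e^\epsilon,
\]
since the supremum dominates the value attained at any single classifier.

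First I would set $p := \mathbb{D}(g(X)=Y)$ and verify that $S_m = \sum_{i=1}^m \ind_{g(X_i)=Y_i}$ is exactly a binomial random variable with parameters $m$ and $p$: each indicator $\ind_{g(X_i)=Y_i}$ is an independent Bernoulli trial with success probability $p$, because the pairs $(x_i,y_i)$ are drawn i.i.d.\ from $\mathbb{D}$ and $g$ is fixed. This is precisely the setting of Definition~\ref{eq:lb_est}, so the hypothesis that $\widehat p$ is a $(\gamma_m,\beta)$ lower bound estimator applies and yields $p > \widehat p(S_m) - \gamma_m$ with probability at least $1-\beta$.

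It then remains to chain the two facts. On the event of probability at least $1-\beta$ on which $p > \widehat p(S_m)-\gamma_m$, multiplying through by the positive constant $1+e^\epsilon$ preserves the inequality, and substituting into the deterministic bound above gives
\[
H_\epsilon(P\|Q) \geq (1+e^\epsilon)\,p - e^\epsilon \geq (1+e^\epsilon)\bigl(\widehat p(S_m)-\gamma_m\bigr) - e^\epsilon,
\]
which is the claimed statement.

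The argument is essentially routine once the pieces are aligned, so there is no real obstacle; the only points requiring care are bookkeeping of the inequality directions and the role of the sign of $1+e^\epsilon$. Specifically, I would make sure the supremum in Lemma~\ref{lem:classification} is used as a genuine \emph{lower} bound on the divergence rather than an estimate, and emphasize that $1+e^\epsilon>0$ is exactly what lets the estimator's lower bound on $p$ propagate cleanly into a lower bound on $H_\epsilon$ without flipping the inequality.
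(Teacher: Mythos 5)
Your proof is correct and follows the same approach the paper intends: it observes that $S_m$ is binomial with parameter $p = \mathbb{D}(g(X)=Y)$ for the fixed classifier $g$, applies the $(\gamma_m,\beta)$ lower bound estimator property to $p$, and chains this with the supremum bound from Lemma~\ref{lem:classification}. The paper's own proof simply states this is ``trivial since $S_m$ is distributed as a binomial random variable for a fixed function $g$''; yours spells out the same reasoning in full, including the useful remark that $1+e^\epsilon>0$ preserves the inequality direction.
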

\begin{proof}
The proof is trivial since $S_m$ is distributed as a binomial random variable for a fixed function $g$. 
\end{proof}
The above proposition shows that, given any classifier $g$, we can find a lower bound on $H_\epsilon$. A natural question is how to find such a classifier. Here we borrow from the extensive work on binary classification and let 
\begin{equation}
\label{eq:logistic}
L(\widehat y, y) := -y \log \widehat y - (1 - y) \log (1 - \widehat y)
\end{equation}
be the logistic loss. The way we choose $g$ is by minimizing $L$ on a training sample. To more precise, we present the subroutine for estimating $H_\epsilon(\cdot\,\|\,\cdot)$, using inputs $P = \calM(D_0)$ and $Q = \calM(D_1)$, in Algorithm~\ref{alg:hs}.

\begin{algorithm}
    \caption{Estimate Hockey Stick divergence}
    \begin{algorithmic}[1]
    \label{alg:hs_divergence}
    \STATE {\bfseries Inputs:} Distributions $P$ and $Q$; $\widehat{p}$ as in Definition~\ref{eq:lb_est}; $\gamma_m > 0$; $\epsilon>0$; and sample size $m$;
    \label{alg:hs}
    \STATE Define a distribution $\mathbb{D}$  as in Lemma~\ref{lem:classification} with input distributions $P$ and $Q$.
    \STATE Generate samples $\{(x_i, y_i)\}_{i=1}^m, (x_i', y_i')_{i=1}^m \sim \mathbb{D}^m$.
    \STATE Compute $h^* = \arg\min_{h \in H} \frac{1}{m} \sum L(h(x_i), y_i)$ where $L$ is as in \eqref{eq:logistic}.
    \STATE Define $g(x) = \ind_{h^*(x) > 1/2}$ for every $x$.
    \RETURN {$(1+e^\epsilon)[\widehat p\left(\sum_{i=1}^m \ind_{g(x'_i) = y_i')}\right) - \gamma_m] - e^\epsilon$.}
\end{algorithmic}
\end{algorithm}

Using the above estimator, we now present how it can be used to implement Algorithm~\ref{alg:general} in Algorithm~\ref{alg:hs_tester}.

\begin{algorithm}[tb]
  \caption{\hstester}
  \label{alg:hs_tester}
    \begin{algorithmic}[1]
    \STATE {\bfseries Inputs:}   \alglinelabel{algln:hsinput} Parameters $\calM$, $T$, $\mathcal G$, and $\beta$ as in Algorithm~\ref{alg:general}; parameters $\widehat{p}$ and $\gamma_m$ as in Algorithm~\ref{alg:hs_divergence}; approximate DP parameters $(\epsilon,\delta)$; and sample size $m$;
    \STATE Define the estimator $\widehat{D}(P\,\|\,Q)$ to be the output of Algorithm~\ref{alg:hs_divergence} with inputs $P$, $Q$, $\widehat{p}$, $\gamma_m$, $\epsilon$, and $m$.
    \STATE Set $\tau = \delta$.
    \STATE{Run the same steps as in Algorithm~\ref{alg:general} with inputs $\calM$, $T$, $\mathcal G$, $\widehat{D}$, $\tau$, and $\beta$.}
    \end{algorithmic}
\end{algorithm}

\subsubsection{MMD-tester}

The definition of approximate differential privacy requires that a mechanism's output distributions for two neighboring datasets are not substantially different from each another. Since the statistical literature describes many distance metrics between distributions, another approach to auditing a differentially private mechanism is to relate the magnitude of one of these distances to the strength of the claimed privacy guarantee, and then apply a statistical test for estimating the distance. An example of such a distance is the \emph{maximum mean discrepancy}
\begin{equation}
\MMD(P, Q, H) \triangleq \sup_{f \in H : \norm{f} \le 1} \left(\int f dP - \int f dQ\right) \label{eq:mmd}
\end{equation}
which is the maximum difference between the expected values with respect to distributions $P$ and $Q$ of any function in the unit ball of a reproducing kernel Hilbert space $H$. Theorem \ref{thm:mmd} gives a relationship between maximum mean discrepancy and approximate differential privacy, and is proved in the Appendix.

\begin{theorem} \label{thm:mmd} Let $H$ be a reproducing kernel Hilbert space with domain $\calX$ and kernel $k$. Suppose $k(x, x) \le 1$ for all $x \in \calX$. If mechanism $\calM : \calD \rightarrow \calX$ satisfies $(\eps, \delta)$-differential privacy then
\[
\MMD(\calM(D_0), \calM(D_1), H) \le e^\eps - 1 + (1 + e^{-\eps})\delta
\]
for any pair of neighboring datasets $D_0, D_1 \in \calD$. 
\end{theorem}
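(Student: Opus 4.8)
The plan is to reduce the statement to a one-sided total-variation bound and then to a two-line argument driven by the two directions of the privacy inequality. Write $P = \calM(D_0)$ and $Q = \calM(D_1)$. First I would record the pointwise consequence of the normalization $k(x,x)\le 1$: by the reproducing property and Cauchy--Schwarz, every $f$ in the unit ball of $H$ satisfies $|f(x)| = |\langle f, k(x,\cdot)\rangle| \le \norm{f}\sqrt{k(x,x)} \le 1$, so $\norm{f}_\infty\le 1$. Consequently the supremum defining $\MMD(P,Q,H)$ in \eqref{eq:mmd} is taken over a subset of the sup-norm unit ball, giving the relaxation
\[
\MMD(P,Q,H) \le \sup_{f\,:\,\norm{f}_\infty\le 1}\left(\int f\,dP - \int f\,dQ\right).
\]

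Next I would convert this integral probability metric into a one-sided total variation quantity. For $f$ with $\norm{f}_\infty\le 1$, set $g = (1+f)/2\in[0,1]$; since $\int 1\,d(P-Q)=0$ we have $\int f\,d(P-Q) = 2\int g\,d(P-Q)$, and the layer-cake identity gives $\int g\,d(P-Q) = \int_0^1\bigl(P(g>t)-Q(g>t)\bigr)\,dt$. Each threshold set $\{g>t\}$ is a measurable subset of $\calX$, so after bounding $P(A)-Q(A)$ uniformly over measurable $A$ and integrating over $t\in[0,1]$ it suffices to control $\sup_{A}\bigl(P(A)-Q(A)\bigr)$.

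The heart of the argument is bounding this one-sided total variation under $(\epsilon,\delta)$-DP. Fix $A$ and write $a = P(A)$, $b = Q(A)$. Applying \eqref{eq:dp} to the ordered pair $(D_0,D_1)$ on $A$ yields $a - e^\epsilon b\le \delta$, while applying it to the pair $(D_1,D_0)$ on $A^c$ (legitimate because the neighboring relation is symmetric) yields $Q(A^c)\le e^\epsilon P(A^c)+\delta$, i.e.\ $e^\epsilon a - b\le e^\epsilon-1+\delta$. Taking the convex combination of these two inequalities with equal weights $1/(1+e^\epsilon)$ cancels the cross terms and produces
\[
P(A)-Q(A) = a-b \le \frac{e^\epsilon-1+2\delta}{e^\epsilon+1},
\]
whence, combining with the previous two paragraphs, $\MMD(P,Q,H)\le 2(e^\epsilon-1+2\delta)/(e^\epsilon+1)$.

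Finally I would close the gap between this expression and the stated bound by a routine algebraic check: writing $u = e^\epsilon$, the difference between the claimed right-hand side $u-1+(1+u^{-1})\delta$ and $2(u-1+2\delta)/(u+1)$ simplifies to $(u-1)^2(u+\delta)/\bigl(u(u+1)\bigr)\ge 0$, so the claimed inequality follows. The step I expect to be the main obstacle is the one-sided total-variation bound: passing from $\MMD$ to total variation costs a factor of two, and the stated constant is recovered only because the $(\epsilon,\delta)$-DP control of one-sided total variation carries the sharp $1/(e^\epsilon+1)$ factor rather than the naive $e^\epsilon-1+\delta$. Obtaining that sharp factor is precisely what forces the use of the privacy inequality in both neighbor orderings (on $A$ and on $A^c$) together with the specific convex combination above.
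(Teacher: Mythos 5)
Your proof is correct, and it takes a genuinely different route from the paper's. Both arguments open identically: the reproducing property plus Cauchy--Schwarz shows the unit ball of $H$ sits inside the sup-norm unit ball. From there the paper bounds $\int f\,dP - \int f\,dQ$ by $\int \left|\frac{dP}{dQ} - 1\right| dQ$, partitions $\calX$ into $A = \{dP/dQ \ge 1\}$ and $B = \{dP/dQ < 1\}$, and applies the privacy inequality once in each direction (forward on $A$, reverse on $B$); summing the two pieces lands exactly on $e^\eps - 1 + (1+e^{-\eps})\delta$. You instead reduce to the one-sided total variation $\sup_A\bigl(P(A)-Q(A)\bigr)$ via $g=(1+f)/2$ and the layer-cake identity, bound it by $\frac{e^\eps-1+2\delta}{e^\eps+1}$ using the two neighbor orderings on $A$ and $A^c$ with an equal-weight convex combination, and then relax algebraically to the stated constant; your algebra checks out, since with $u=e^\eps$ the gap between the two bounds is $\frac{(u-1)^2(u+\delta)}{u(u+1)} \ge 0$. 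Your route buys two things. First, it avoids the Radon--Nikodym derivative entirely, so it requires no absolute continuity of $\calM(D_0)$ with respect to $\calM(D_1)$; the paper's proof assumes this implicitly when it writes $\int f\,dP = \int f \frac{dP}{dQ}\,dQ$, and $(\eps,\delta)$-DP with $\delta>0$ does not guarantee it (a Lebesgue-decomposition patch would be needed to make that argument fully general). Second, your intermediate bound $\frac{2(e^\eps-1+2\delta)}{e^\eps+1}$ is strictly sharper than the theorem's: it stays below $2$ (the trivial bound on $\MMD$ once $\norm{f}_\infty \le 1$) as $\eps \to \infty$, whereas the stated right-hand side diverges. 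What the paper's decomposition buys in exchange is brevity: it reaches the stated constant in one pass, with no need for your final comparison step.
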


Note that right-hand side in the statement of Theorem \ref{thm:mmd} approaches zero as $\eps \rightarrow 0$ and $\delta \rightarrow 0$. Therefore a positive lower bound on the maximum mean discrepancy between $\calM(D_0)$ and $\calM(D_1)$ serves as a privacy audit of mechanism $M$. One way to estimate the maximum mean discrepancy would be to draw samples from $\calM(D_0)$ and $\calM(D_1)$ and then explicitly perform the optimization over the function class $H$ from Eq.~\eqref{eq:mmd}. Such an approach would resemble the estimation of a lower bound on the hockey stick divergence described in Section \ref{sec:hs_div}. However \cite{gretton2012kernel} showed that no such optimization is necessary, since the maximum mean discrepancy equals the expected value of a simple expression that depends only on the kernel of $H$. Theorem \ref{thm:kerneltest} is a succinct statement of one of the main results of \cite{gretton2012kernel}. 

\begin{theorem}[\cite{gretton2012kernel}] \label{thm:kerneltest} Let $X$ and $X'$ be independent random variables with distribution $P$. Let $Y$ and $Y'$ be independent random variables with distribution $Q$. Let $H$ be a reproducing kernel Hilbert space with domain $\calX$ and kernel $k$. If $k(x, x) \le 1$ for all $x \in \calX$ then
\[
\MMD(P, Q, H)^2 = \E\left[h(X, X', Y, Y')\right]
\]
where $h(x, x', y, y') = k(x, x') - 2k(x, y) + k(y, y')$.
\end{theorem} 

Theorem \ref{thm:kerneltest} shows that it is possible to estimate maximum mean discrepancy by estimating the right-hand side of the theorem statement. Algorithm~\ref{alg:mmd_divergence} and Algorithm~\ref{alg:mmd_test} describe a test of approximate differential privacy based on this estimator, and Theorem \ref{thm:mmdbound} gives a theoretical guarantee of its performance. A proof of Theorem \ref{thm:mmdbound} can be found in the Appendix. While we could have directly used the estimator from \cite{gretton2012kernel}, we modified it to take advantage of the possibility of a low variance sample, using a data-dependent version of Bernstein's inequality due to \cite{maurer2009empirical}.

\begin{algorithm}
  \caption{Estimate MMD divergence}
\begin{algorithmic}[1]
    \label{alg:mmd_divergence}
  \STATE {\bfseries Input:} Distributions $P$ and $Q$; kernel function $k$; $\beta > 0$; $\epsilon > 0$; and sample size $n$; 
  \STATE Generate samples $X, X'\sim P^n$ and $Y, Y' \sim Q^n$.
  \STATE Define $h(x, x', y, y') := k(x, x') - 2k(x, y) + k(y, y')$.
  \STATE Compute 
  \begin{align*}
      \hat{\mu} &= \frac{1}{n}\sum_{i=1}^n h(X_i, X'_i, Y_i, Y'_i) \\
      \hat{s} &= \frac{1}{n}\sum_{i=1}^n h(X_i, X'_i, Y_i, Y'_i)^2
  \end{align*}
  \STATE Set $\hat{\sigma}^2 = \hat{s} - \hat{\mu}^2$.
  \STATE Compute 
  \begin{align*}
  \hat{m} &= \hat{\mu} - \sqrt{\frac{2\hat{\sigma}^2 \log 2/\beta}{n}} - \frac{28 \log 2/\beta}{3(n-1)} - (e^\eps - 1)^2.
  \end{align*}
  \RETURN{the estimate
  \[
  \frac{\sqrt{(e^\eps - e^{-\eps})^2 + (1 + e^{-\eps})\hat{m}} - (e^\eps - e^{-\eps})}{(1 + e^{-\eps})}.
  \]
  }
\end{algorithmic}
\end{algorithm}

\begin{algorithm}
  \caption{\mmdtester}
\begin{algorithmic}[1]
    \label{alg:mmd_test}
  \STATE {\bfseries Input:} Parameters $\calM$, $T$, $\mathcal G$, and $\beta$ as in Algorithm~\ref{alg:general}; parameter $k$ as in Algorithm~\ref{alg:mmd_divergence}; approximate DP parameters $(\epsilon,\delta)$; and sample size $n$;
  \STATE Define the estimator $\widehat{D}(P\,\|\,Q)$ to be the output of Algorithm~\ref{alg:mmd_divergence} with inputs $P$, $Q$, $k$, $\beta$, $\epsilon$ and $n$.
    \STATE Set $\tau = \delta$.
    \STATE{Run the same steps as in Algorithm~\ref{alg:general} with inputs $\calM$, $T$, $\mathcal G$, $\widehat{D}$, $\tau$, and $\beta$.}
\end{algorithmic}
\end{algorithm}


\begin{theorem} \label{thm:mmdbound} If mechanism $\calM$ satisfies $(\eps, \delta)$-differential privacy then Algorithm \ref{alg:mmd_test} does not find a privacy violation with probability at least $1 - \beta$.
\end{theorem}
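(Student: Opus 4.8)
The plan is to reduce the claim to a one-sided deviation bound for a single divergence estimate. Since Algorithm~\ref{alg:mmd_test} runs Algorithm~\ref{alg:general} with $\tau=\delta$, it reports a violation only when some estimate $\widehat{D}(\calM(D_0)\,\|\,\calM(D_1))$ produced by Algorithm~\ref{alg:mmd_divergence} exceeds $\delta$. It therefore suffices to show that, for a fixed neighboring pair, the estimator satisfies $\widehat{D}(\calM(D_0)\,\|\,\calM(D_1))\le\delta$ with probability at least $1-\beta$; a union bound over the at most $2T$ estimates computed across the trials (equivalently, supplying each call with confidence parameter $\beta/(2T)$) then upgrades this to the guarantee for the whole procedure. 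Throughout I write $M=\MMD(\calM(D_0),\calM(D_1),H)$ and abbreviate $a=e^\eps-1$, $b=1+e^{-\eps}$, $c=e^\eps-e^{-\eps}$, noting the identity $ab=c$.

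First I would control the statistic $\hat\mu$. Because $k(x,x)\le 1$, Cauchy--Schwarz in the RKHS gives $|k(x,y)|\le 1$, hence $|h(x,x',y,y')|\le 4$, so the summands $h(X_i,X_i',Y_i,Y_i')$ are i.i.d.\ and bounded. Applying the lower-tail empirical Bernstein inequality of \cite{maurer2009empirical} to these variables yields, with probability at least $1-\beta$,
\[
\E\!\left[h(X,X',Y,Y')\right] \ \ge\ \hat\mu - \sqrt{\frac{2\hat\sigma^2\log(2/\beta)}{n}} - \frac{28\log(2/\beta)}{3(n-1)}.
\]
By Theorem~\ref{thm:kerneltest} the left-hand side equals $M^2$, so on this event the quantity $\hat m$ computed by Algorithm~\ref{alg:mmd_divergence} obeys $\hat m \le M^2 - a^2$.

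It then remains to turn this into a bound on the returned value through a deterministic monotonicity argument. The estimator returns $\widehat{D}=(\sqrt{c^2+b\hat m}-c)/b$, which is nondecreasing in $\hat m$. The relationship between maximum mean discrepancy and $(\eps,\delta)$-DP underlying Theorem~\ref{thm:mmd}, in the sharp quadratic form that the estimator is built to invert (and from which the stated linear bound follows via $\sqrt{a^2+2c\delta+b\delta^2}\le a+b\delta$), reads
\[
M^2 \ \le\ a^2 + 2c\,\delta + b\,\delta^2 .
\]
Combining with $\hat m \le M^2 - a^2$ gives $\hat m \le 2c\delta + b\delta^2$, and substituting into the estimator while using the algebraic identity $c^2+b(2c\delta+b\delta^2)=(c+b\delta)^2$ yields
\[
\widehat{D} \ \le\ \frac{\sqrt{c^2+b(2c\delta+b\delta^2)}-c}{b} \ =\ \frac{(c+b\delta)-c}{b} \ =\ \delta ,
\]
so no violation is declared on the good event. (If $c^2+b\hat m<0$ the estimator cannot exceed $\delta$ either, since declaring a violation requires a large positive estimate; only the upper tail matters.)

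The main obstacle, and the step I would be most careful about, is the MMD--DP inequality in its displayed quadratic form: the linear bound as stated in Theorem~\ref{thm:mmd} is not by itself strong enough, because squaring it puts a spurious factor $b^2$ in place of $b$ in the $\delta^2$ term and would leave slack $b(b-1)\delta^2>0$. The whole argument hinges on the fact that the nonlinear transform applied in Algorithm~\ref{alg:mmd_divergence} \emph{exactly} inverts the quadratic relationship $M^2=a^2+2c\delta+b\delta^2$, so I would invoke (or re-derive, following the Appendix argument for Theorem~\ref{thm:mmd}) that sharper inequality rather than its linear weakening. By comparison, the probabilistic ingredient---verifying the boundedness of $h$ and applying the empirical Bernstein bound with the correct tail direction and constants---is routine.
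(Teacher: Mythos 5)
Your proposal mirrors the paper's own proof almost step for step: empirical Bernstein applied to $\hat\mu$, the kernel identity of Theorem~\ref{thm:kerneltest} to replace $\E[h]$ by $\MMD^2$, and then inversion of a quadratic in $\delta$ to bound the returned estimate; your explicit union bound over the $2T$ estimates is in fact more careful than the paper, whose proof silently treats a single call. The substantive point is the one you flagged yourself: the argument needs the quadratic inequality $\MMD^2 \le (e^\eps-1)^2 + 2(e^\eps-e^{-\eps})\delta + (1+e^{-\eps})\delta^2$, and this does \emph{not} follow from squaring Theorem~\ref{thm:mmd}, which produces $(1+e^{-\eps})^2\delta^2$ in the last term. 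The paper's proof asserts precisely this ``by squaring'' step, so you have correctly located an error in the published argument.

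However, your proof does not close that gap, and it cannot be closed the way you propose. You invoke the sharp quadratic form as ``the inequality the estimator is built to invert'' and defer its derivation, but under the paper's only hypothesis $k(x,x)\le 1$ the inequality is false. Take $\calX=\{x_1,x_2\}$ with the positive semidefinite kernel $k(x_1,x_1)=k(x_2,x_2)=1$, $k(x_1,x_2)=-1$, and the $(0,\delta)$-DP pair $P=(\tfrac12+\tfrac\delta2,\tfrac12-\tfrac\delta2)$, $Q=(\tfrac12-\tfrac\delta2,\tfrac12+\tfrac\delta2)$. Then $\MMD^2=(P-Q)^\top K(P-Q)=4\delta^2$, whereas the claimed bound at $\eps=0$ is $2\delta^2$; for large $n$ Algorithm~\ref{alg:mmd_divergence} then returns approximately $\sqrt{2}\,\delta>\delta$, so Theorem~\ref{thm:mmdbound} itself fails for this kernel (Theorem~\ref{thm:mmd} is tight here: $\MMD=2\delta=e^0-1+(1+e^{-0})\delta$). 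So no re-derivation ``following the Appendix argument'' can succeed in the stated generality. A correct repair must either (i) add the hypothesis $0\le k(x,y)\le 1$ --- which the paper implicitly uses anyway, since its bound $-2\le h\le 2$ and the hard-coded constant $28/(3(n-1))$ require a nonnegative kernel, while your weaker bound $|h|\le 4$ would force the constant $56/(3(n-1))$ --- and then actually prove the quadratic form in that regime (at $\eps=0$ it holds because the cross term in $\int\!\!\int k\,d\nu\,d\nu$ is nonnegative, giving $\MMD^2\le 2\,\delta^2$, but general $\eps$ still needs an argument); or (ii) change the estimator to invert the honestly squared bound, replacing $(1+e^{-\eps})$ by $(1+e^{-\eps})^2$ in the return expression of Algorithm~\ref{alg:mmd_divergence}, after which your argument, and the paper's, goes through verbatim.
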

 
\section{Dataset Search}
\label{sec:dataset-search}

As seen in \cref{sec:building-blocks}, \cref{alg:general}, auditing tests with only black-box access to a mechanism $\calM$ consist of two stages: (1) generation of neighboring datasets $D_0$ and $D_1$, and (2) divergence estimation over $\calM(D_0)$, $\calM(D_1)$. Below we discuss the dataset generation step.

The methods proposed in the literature fall in two categories:
\begin{itemize}
    \item Exploratory methods: These methods ignore any knowledge about the algorithm and/or potential bugs. In this category falls randomized search, that generates datasets without any side information independently at random over trials. A second exploratory method is to divide the search space over a grid and deterministically evaluate the test over the grid. These methods can be useful  when knowledge about the mechanism is available and randomized search can have guarantees of finding a dataset where the guarantee will break. For example, \cite{AKOOMS23} shows that for high-dimensional gaussians, random canaries over the unit sphere are guaranteed to sample a dataset where the privacy parameters are maximized. 
    \item Explotation methods: For black-box testing, the literature has mostly considered crafted datasets that assume certain datasets can be more vulnerable, e.g. where the sensitivity of the mechanism is maximized. While this methods can be very efficient to catch certain bugs, they may fail at finding others that do not happen at the predicted datasets.
\end{itemize}

We propose a new heuristic that balances these two frameworks by leveraging a gaussian process bandit \cite{DKB14} to tradeoff explotation and exploration and find datasets that maximize a specific divergence in \cref{alg:general}. Specifically, assuming that we have access to $$R_\alpha^{h,n}: (D,D') \subseteq \calX \times \calX \to \mathbb{R}.$$ Our goal is to produce a sequence $(D_t,D_t')_t$ that approaches the optimum. 
 In our experiments we will use an open-sourced implementation of the well known Bayesian optimization software Vizier \cite{GSM17}.
\section{Experiments}
\label{sec:experiments}
We evaluate the different testers and dataset finders along several dimensions. We begin by comparing the testers' ability to detect different bugs in non-private randomized mechanisms. Then, we evaluate their false positive rate for flagging private mechanisms as non-private.
We then move to evaluate different dataset search schemes described in section 4 with all testers. Finally, we provide a complexity analysis of different testers run time and sample complexity.

\textbf{Hyperparameters.} Unless stated otherwise, all testers use a probability of failure $\beta=1/3$. \renyitester uses $\alpha=1.5$, approximate DP testers set $\delta=0.01$ for approximate DP tests and $\delta=0$ for pure DP. \gilberttester uses a universe size of 100. For all the experiments we report the average over 10 runs. 

DP-Sniper was run on our mechanisms, however, the tester always failed to complete, even for small sample sizes; thus, it is omitted from our results.

\subsection{Evaluated mechanisms}
\textbf{Pure DP mean mechanisms}. The first three mechanisms attempt to privately compute the mean by generating the random estimates
\begin{gather*}
\dpmean(X) := \frac{\sum_{i=1}^n X_i}{\tilde{n}} + \rho_1, \\
\nondplaplaceone(X) := \frac{\sum_{i=1}^n X_i}{n}+  \rho_2  \\
\nondplaplacetwo(X) := \frac{\sum_{i=1}^n X_i}{n}+ \rho_1
\end{gather*}

 where  $\tilde{n} = \max\{10^{-12}, n + \tau\}$, $\tau\sim{\rm Laplace}(0,2/\varepsilon)$, $\rho_1 \sim{\rm Laplace}(0,2/[\tilde{n}\varepsilon])$, and $\rho_2 \sim {\rm Laplace}(0,2/[n \varepsilon])$. \dpmean satisfies $\epsilon$-DP, \nondplaplaceone one violates the guarantee because it has access to the private number of points, and \nondplaplacetwo one privatizes the number of points to estimate the scale of the noise added to the mean statistic but the mean itself is computed using the non-private number of points. We also evaluate \joint an implementation of a private median mechanism, and \noisymax a private noisy max mechanism for selecting an index with highest counts from a finite set of $[n]$ indices. 
 
\textbf{Sparse vector technique mechanisms}. The next six mechanisms address different private and non-private implementations of the sparse vector technique (SVT), a mechanism for releasing a stream of $c$ queries on a fixed dataset. SVT mechanisms compare each query value against a threshold and the  given algorithm returns certain outputs for a maximum number of queries $c$. We denote these by $\svtone$--$\svtsix$ and they correspond to Algorithms~1-6 in \cite{LSL16}. $\svtone$ and  $\svttwo$  satisfy $\epsilon$-DP. $ \svtfour $ satisfies $(\frac{1+6c}{4})$-DP, and $\svtthree, \svtfive$, and $\svtsix$ do not satisfy $\epsilon$-DP for any finite $\epsilon$.

 \textbf{R\'enyi DP mean mechanisms}. To verify the ability of our tester to detect violations of R\'enyi differential privacy we first instantiate  \nondpgaussianone and \nondpgaussiantwo, analogs of \nondplaplaceone and \nondplaplacetwo but that add Gaussian noise instead of Laplace noise.

\subsection{Bug catching rate}
\textbf{False positives. } Note that from an information theoretic point of view, no tester can claim a truly private mechanism is not private. Therefore, the only reason for a potential false positive comes from a sampling error which is controlled by the parameter $\beta$. In our experiments we verify this is true as none of our testers claim a private mechanism to be non-private.

\textbf{True positives.} Table~\ref{tbl:detection} 
 introduces the true negative rate detection for each tester. For this table, we fixed a pair of neighboring datasets where a violation is known to occur. We ran each tester 10 times for two different privacy regimes, high privacy ($\epsilon=0.01$) and low privacy ($\epsilon=1.0$). Due to discretization, the HistogramTester does not apply to mechanisms whose output has more than one-dimension and consequently does not apply to SVT mechanisms. We observe that on the specific datasets provided, the HistogramTester detects only \nondplaplacetwo independently of the sample size but the fails at detecting other bugs. The \renyitester misses all the SVT mechanisms but catches the rest for the high privacy regime. The MMD tester catches version 1 of non-private mean but fails at catching the rest. Finally, the HockeyStick tester catches all but \svtthree. For SVT, it benefits from the high privacy regime while one-dimensional it varies between these two.

\newcommand{\mc}[3]{\multicolumn{#1}{#2}{#3}}
\definecolor{redish}{rgb}{0.91, 0.39, .38}
\newcommand{\hl}{\cellcolor{redish}}
\begin{table*}[]
\caption{Detection rate of non-private mechanisms. We report the number of times each tester detects a privacy violation on a fixed dataset where a privacy violation is known to occur. Each tester is run 10 times. We report for high privacy regime ($\epsilon=0.01$) and low privacy regime ($\epsilon=1.0$)}
\label{tbl:detection}
\arxiv{\tiny}
\centering
\begin{tabular}{@{}|c|c|ccc|ccc|ccc|ccc|@{}}
\bottomrule
Mechanism                   & $\epsilon$ & \mc{3}{c}{Historgram}& \mc{3}{c}{HS}   & \mc{3}{c}{R\'enyi} &\mc{3}{c}{MMD} \\ \specialrule{.1em}{.05em}{.05em} 
n  (In thousands)           &            &$50$&$100$&$500$   &$50$&$100$&$500$ & $50$ &$100$& $500$ &$50$ &$100$& $500$ \\\specialrule{.1em}{.05em}{.05em} 
\nondplaplaceone              & 0.01       &  \hl  0  & \hl  0 &         \hl  0  & 9& 8 & 9 &     10  & 10  & 10 & 10 & 9 & 10    \\ \cline{2-14}
        & 1.0        &  \hl  0 & \hl  0 &     \hl  0        & 10&10&10        &  \hl0&\hl0&\hl0& \hl  0 &   \hl  0  &    \hl  0 \\ \specialrule{.1em}{.05em}{.05em} 
\nondplaplacetwo                & 0.01        & 10 & 10 &  10      & 10&10&10        &  10     & 7 & 5 & 0& 1&    0  \\ \cline{2-14}
        & 1.0        &  \hl  0   & \hl  0  &  \hl  0       & \hl 0 & \hl 0 & \hl  0         &  \hl  0     & \hl  0&\hl  0    &\hl0 &\hl0 &\hl0\\ \specialrule{.1em}{.05em}{.05em} 
\nondpgaussianone              & 0.01        &\hl 0  & \hl 0& \hl 0      & \hl 0& \hl 0 & \hl 0        &    10&10&10   &  10&4&10   \\ \cline{2-14}
.       & 1.0        &    \hl 0 & \hl 0& \hl 0   & 10&10&10             &      \hl 0& \hl 0& \hl0         &    \hl   0 & \hl0 & \hl0  \\ \specialrule{.1em}{.05em}{.05em} 
\nondpgaussiantwo              & 0.01        &   \hl 0   &\hl  0& \hl 0        & 6 & 5 & 5   & 10   &     10  &  10   &0&0&0\\ \cline{2-14}
        & 1.0        &   \hl 0  &\hl 0 &\hl 0       & \hl 0 &\hl 0 &\hl 0     &\hl 0  & \hl  0     &\hl 0 & \hl0& \hl0& \hl0     \\ \specialrule{.1em}{.05em}{.05em} 
SVT3                        & 0.01        &       & - &         &\hl  0 & \hl 0 & \hl 0       & \hl 0  & \hl 0   & \hl 0     & \hl0  & \hl0 & \hl0   \\ \cline{2-14}
        & 1.0        &       & - &         & \hl 0 & \hl 0 & \hl 0       & \hl 0  & \hl 0   &\hl  0     &  \hl  0 & \hl  0 & \hl  0   \\ \specialrule{.1em}{.05em}{.05em} 
SVT4                        & 0.01        &       & - &         & \hl 0 & \hl 0 & \hl 0       & \hl  0  & \hl 0   & \hl 0     &   1&  2&    0\\ \cline{2-14}
        & 1.0        &       & - &         & 10& 10& 10      & \hl 0  &\hl  0   & \hl 0     &  0 & 0 &  0  \\ \specialrule{.1em}{.05em}{.05em} 
SVT5                        & 0.01        &       & - &         & \hl 0 & \hl 0 & \hl 0       & \hl 0  & \hl 0   &\hl  0     &  \hl 0&\hl 0 &\hl 0   \\ \cline{2-14}
        & 1.0        &       & - &         & 10& 10& 10      & \hl 0  & \hl 0   & \hl 0     &  \hl 0& \hl 0& \hl   0\\ \specialrule{.1em}{.05em}{.05em} 
SVT6                        & 0.01        &       & - &         & \hl 0 & \hl 0 & \hl 0       & \hl 0  & \hl 0   & \hl 0     &  \hl0 &\hl 0 & \hl0   \\ \cline{2-14}
        & 1.0        &       & - &         & 10& 10& 10     & \hl 0  & \hl 0   & \hl 0     &\hl0   &\hl 0 & \hl0   \\ \specialrule{.1em}{.05em}{.05em} 
 \bottomrule
\end{tabular}
\end{table*}

\subsection{Dataset Finder algorithms}
Table~\ref{tbl:dataset-finders} compares three different approaches for finding datasets where privacy violations occur. We observe that all approaches tend to find a dataset for a given bug, but grid search, that pursues an exhaustive deterministic evaluation of the dataset space tends to find datasets at a slower rate than randomized search algorithms.

\begin{table*}[]
\centering
\caption{Average number of trials to find a bug. We run each tester 10 times. For each pair of (Tester, $\calG$) we report the average number of trials the algorithm takes to find a privacy violation, with a maximum of 50 trials. \renyitester uses $\alpha=1.5$. Histogram Tester uses a discretization parameter of $m=100$. All testers use $n=500000$ samples and $\epsilon= 0.01$}
\label{tbl:dataset-finders}
\begin{tabular}{@{}l|cccc@{}}
\toprule
Mechanism (down)                & \multicolumn{4}{c}{Gaussian process bandit} \\ \midrule
Dataset finder                  & Histogram            & Renyi                  & MMD      & HS   \\
\nondplaplaceone             & 50          &     1.0         &      1.0    & 1.2       \\
\nondplaplacetwo          & 8.3 &    9.5        &       12.3   &  2.5     \\
\nondpgaussianone          &   50.0              &  1.0            &     1.0     &  50.0      \\
\nondpgaussiantwo          &   50.0              &  3.6   &       47.3   &  19.1    \\
SVT 3                           &   -                 &     50.0               &        50.0  &   50.0   \\
SVT 4                           &   -                 &     50.0               &        50.0  &   50.0   \\
SVT 5                           &   -                 &  50.0           &         50.0 &   50.0  \\
SVT 6                           &   -                 &    50.0                &         50.0 &   50.0   \\ \midrule
            &  \multicolumn{4}{c}{Randomized}  \\ \midrule
\nondplaplaceone             &  50.0  &     1.0                 &    1.0   &  1.3        \\
\nondplaplacetwo              &   10.1  & 3.6               &  12.3     &   1.5      \\
\nondpgaussianone          &   50.0       &    1.0                  &   1.0    &    50       \\
\nondpgaussiantwo           &   50.0       &  1.2   &     47.3  &  45.6     \\
SVT 3             &       -      &    50.0                 & 50.0      &    50   \\
SVT 4             &    -         &    50.0                 &     50.0  &      50     \\
SVT 5                           &    -         &    50.0             &     50.0  & 50         \\
SVT 6                           &    -         &    50.0                 &   50.0    & 50        \\ \midrule
&  \multicolumn{4}{c}{Grid search} \\ \midrule
\nondplaplaceone               & 50.0 &   1.0      &  1.0  & 1.1     \\
\nondplaplacetwo             &  50.0     &   3.0      &  12.3  & 30.2      \\
\nondpgaussianone              & 50.0      &  1.0        &  1.0  & 50.0      \\
\nondpgaussiantwo              & 50.0      & 11.3   &  47.3  & 50.0      \\
SVT 3                           &     -     &  50.0       & 50.0     & 50.0     \\
SVT 4              &     -     & 50.0         &  50.0   & 50.0      \\
SVT 5                &     -     & 50.0   &   50.0  & 50.0     \\
SVT 6        &     -     &  50.0        &   50.0  & 50.0      \\ \bottomrule
\end{tabular}
\end{table*}

\subsection{Running time}
Another important element of a testing library is the running time for each tester. While the purpose of this testing library is not to serve as a unit- test but potentially as an integration test (allowing for a more prolonged testing time), it remains crucial to ensure that running times are not prohibitive to ensure adoption of this library. In Figure~\ref{fig:time_plots} we show the running time as a function of the sample size of running different tests for a single trial of testing the \nondplaplaceone mechanism. Experiment were ran using a Colab notebook with two different back-ends. The first back-end used only a CPU while the second backend used a NVIDIA V100 GPU. We ran the testers varying the number of samples across $\{50000, 100000, 500000\}$. Each tester was ran 10 times and Figure~\ref{fig:time_plots} shows the average running time across the 10 iterations. As one can see, the hockey stick divergence tester running on CPUs seems to have the longest running time yet when training the discriminator model using GPUs, the run time reduces significantly and in fact is faster than most other testers for sample size 500,000. Note also that,even though both the R\'enyi tester and the hockey-stick divergence tester both train a neural network, their running times are different and the benefits from the hardware accelerators seem to be bigger for the hockey-stick divergence test. This could be because we run the hockey-stick divergence tester for a longer period of time (30 epochs vs 5 epochs) in order to achieve convergence.

\narxiv{
\begin{figure}[t]
    \centering
    \begin{subfigure}{0.45\textwidth}
    \centering
    \includegraphics[width=0.8\textwidth]{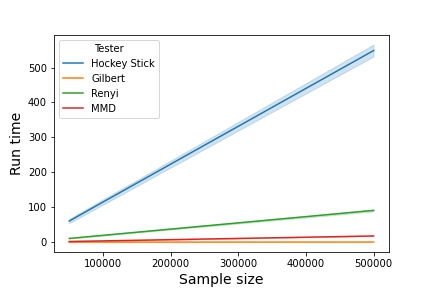} 
    \caption{}
    \end{subfigure}
    \begin{subfigure}{0.45\textwidth}
    \centering
    \includegraphics[width=0.8\textwidth]{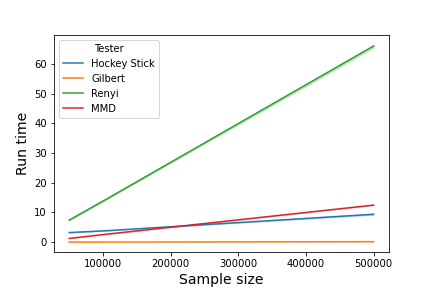}
    \caption{}
    \end{subfigure}
    \caption{Run time (in seconds) for different testers (a) when ran using a single CPU (b) when using a GPU accelerator.}
    \label{fig:time_plots}
\end{figure}
}

\arxiv{
\begin{figure}[t]
    \centering
    \begin{subfigure}{0.49\textwidth}
    \centering
    \includegraphics[width=\textwidth]{run_time_cpu.png} 
    \caption{}
    \end{subfigure}
    \begin{subfigure}{0.49\textwidth}
    \centering
    \includegraphics[width=\textwidth]{run_time_gpu.png}
    \caption{}
    \end{subfigure}
    \caption{Run time (in seconds) for different testers (a) when ran using a single CPU (b) when using a GPU accelerator.}
    \label{fig:time_plots}
\end{figure}
}

\subsection{ScaleGD}
We include a flawed DP-GD's \cite{ACGMMTL16} implementation. The implementation for this mechanism corresponds to a  DP-GD optimizer that receives a model $f_\theta$, learning rate, noise multiplier $\sigma$, clip norm value $G$, and  takes a DP-GD with noise scaled by $\sigma G$ respect to the parameters $\theta$. The implementation simulates a scenario where a developer assumes using a noise multiplier $\sigma_{\text{theory}}$ but in reality uses a noise multiplier $\sigma_{\text{effective}}$. We dub this scenario \scaledSGD. To catch this bug, a one-dimensional model $\theta$ is sufficient so we do not experiment with larger models. Different bugs that could require experimenting with larger models. 
 
We highlight that $\sigma_{\text{effective}} = \frac{\sigma_{\text{theory}}}{2}$ corresponds to a true bug in the literature. It corresponds to a frequent accounting error when using batch or micro-batch clipping instead of per-example clipping in DP-SGD. Per-example clipping is memory and computationally expensive when training high-dimensional models. To address these constraints at the cost of utility, practitioners split a batches of size $n$ into $m$ microbatches of size $n/m$, compute average gradients over each micro-batch, clip and noise the per-microbatch gradient, and finally average the resulting noisy micro-batch gradients. It sometimes goes unnoticed but the sensitivity of per-microbatch gradients is $2G$ instead of $G$. 

We explore this bug on DP-GD using a standard R\'enyi differential privacy accounting framework. The implementation adds gaussian noise with scale $s\cdot \sigma_{\text{theory}}$  where $0<s<1$ is a factor decreasing the assumed noise scale $\sigma_{\text{theory}}$. In \cref{fig:sgd-scale} we show the ability of \renyitester to catch the bug for different values of $s$ given an $(\alpha,\epsilon)-$Renyi DP guarantee. We observe that the divergence estimates follow the true divergence values and the tester catches the bug for $s \leq 0.6$.

\begin{figure}[t]
    \centering
    \includegraphics[scale=0.5]{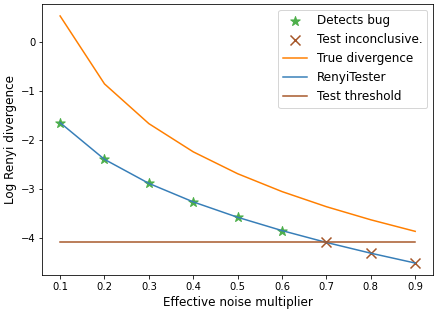}
    \caption{True and estimated value of the R\'enyi divergence on samples of \scaledSGD. }
    \label{fig:sgd-scale}
\end{figure}

\subsection{Effect of sample size}
Although for most of the test runs, the results of Table~\ref{tbl:detection} do not seem to be sensitive to the number of samples drawn from a mechanism. A deeper dive into the divergence evaluation of these mechanisms show that there is in fact a benefit obtained from using larger sample sizes. In particular, for a fixed dataset, Figure~\ref{fig:size_plots}(a) shows that the estimated divergence increases with sample size. Moreover, not surprisingly, the confidence intervals do become smaller as the sample increases. A similar effect can be found on the number of trials required to detect a privacy violation using the Gaussian process bandit algorithm. As one can observe in Figure~\ref{fig:size_plots}(b), the number of trials can drastically decrease --- in view of the reduce variance of divergence estimation --- as the number of samples returned by the mechanism increases.
\begin{figure}[t]
    \centering
        \begin{subfigure}{0.49\textwidth}
    \centering
    \includegraphics[width=0.8\textwidth]{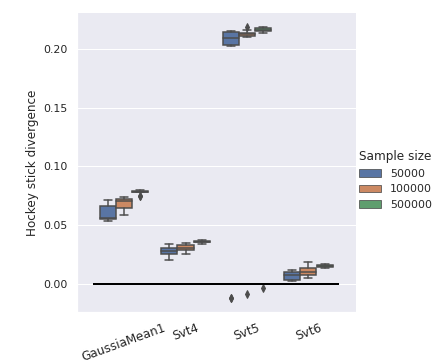}
    \caption{}
    \end{subfigure}
            \begin{subfigure}{0.49\textwidth}
    \centering
    \includegraphics[width=\textwidth]{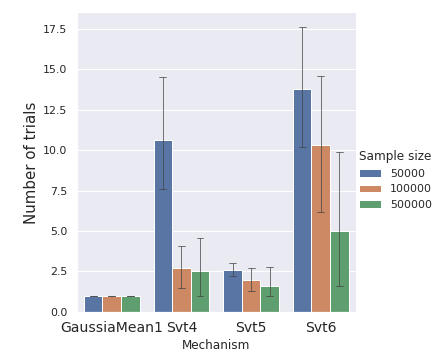}
    \caption{}
    \end{subfigure}
    \caption{Effects of sample size on (a) the divergence estimation of the output mechanism (b) the number of trials needed to find a privacy violation. Privacy parameters in this experiment were $\epsilon = 1$ for the SVT mechanism and $\epsilon = 1.0, \delta = 0.01$ for the Gaussian mechanism}
    \label{fig:size_plots}
\end{figure}

\subsection{Testing anecdotes}
\label{sec:svt4}
It is worth mentioning  that, while preparing the experimental code for this paper, we were under the assumption that the algorithm SVT4 was private. In effect, the mechanism is private albeit with a higher value of $\epsilon$ than what the original introduction of the algorithm claims \cite{LSL16}. We were therefore puzzled by the fact that the Hockey-stick divergence tester would find a privacy violation. It wasn't until further reviewing \cite{LSL16} that we realized that our understanding was wrong and that the tester was indeed correctly flagging a privacy violation on SVT4. 

A similar issue happened when verifying that our tester wouldn't flag private mechanisms as not private. We implemented a mean estimator using a Laplace mechanism to estimate both the numerator and the denominator of the mean. When we ran that algorithm on our tester, the mechanism was flagged as not private. Upon careful inspection of our implementation we realized that we didn't divide the privacy budget to estimate both quantities. This shows how easily bugs can be introduced when implementing private mechanisms.

\section{Conclusion}
We have introduced a new flexible library for testing differential privacy. As demonstrated by even simple implementation issues in our experimentation code, failing to correctly implement a private mechanism can easily happen. These errors have consequences that range from having wrong privacy claims in academic publications (thus making different private algorithms for the same task incomparable) to more serious issues of making public claims around data protection that may not be valid. Having a unified testing framework can help auditors and regulators ensure that private mechanisms are indeed private. It is also a tool that can provide developers with peace of mind in implementing complex private systems. 

We also have introduced a new, powerful way of testing differential privacy through divergence optimization over function spaces. Our results show that this type of function-based estimation consistently outperforms histogram-based divergence estimators. Moreover we provide a variety of these estimators that allow for a better discovery rate of privacy bugs. We hope that by open sourcing this efficient library, the research community can establish a standard of doing end to end testing new differentially private algorithms.  
\bibliographystyle{IEEEtran}
\bibliography{references}

\begin{thebibliography}{10}
\providecommand{\url}[1]{#1}
\csname url@samestyle\endcsname
\providecommand{\newblock}{\relax}
\providecommand{\bibinfo}[2]{#2}
\providecommand{\BIBentrySTDinterwordspacing}{\spaceskip=0pt\relax}
\providecommand{\BIBentryALTinterwordstretchfactor}{4}
\providecommand{\BIBentryALTinterwordspacing}{\spaceskip=\fontdimen2\font plus
\BIBentryALTinterwordstretchfactor\fontdimen3\font minus
  \fontdimen4\font\relax}
\providecommand{\BIBforeignlanguage}[2]{{%
\expandafter\ifx\csname l@#1\endcsname\relax
\typeout{** WARNING: IEEEtran.bst: No hyphenation pattern has been}%
\typeout{** loaded for the language `#1'. Using the pattern for}%
\typeout{** the default language instead.}%
\else
\language=\csname l@#1\endcsname
\fi
#2}}
\providecommand{\BIBdecl}{\relax}
\BIBdecl

\bibitem{GM18}
A.~C. Gilbert and A.~McMillan, ``Property testing for differential privacy,''
  in \emph{Allerton Conference on Communication, Control, and Computing}, 2018.

\bibitem{BSBV21}
B.~Bichsel, S.~Steffen, I.~Bogunovic, and M.~Vechev, ``Dp-sniper: Black-box
  discovery of differential privacy violations using classifiers,'' in
  \emph{Symposium on Security and Privacy (SP)}.\hskip 1em plus 0.5em minus
  0.4em\relax IEEE, 2021.

\bibitem{NWJ10}
X.~Nguyen, M.~J. Wainwright, and M.~I. Jordan, ``Estimating divergence
  functionals and the likelihood ratio by convex risk minimization,''
  \emph{{IEEE} Transactions on Information Theory}, vol.~56, no.~11, 2010.

\bibitem{zhao}
P.~Zhao and L.~Lai, ``Minimax optimal estimation of kl divergence for
  continuous distributions,'' \emph{IEEE Trans. Inf. Theor.}, vol.~66, no.~12,
  p. 7787–7811, dec 2020.

\bibitem{sriperumbudur2012empirical}
B.~K. Sriperumbudur, K.~Fukumizu, A.~Gretton, B.~Sch{\"o}lkopf, and G.~R.
  Lanckriet, ``On the empirical estimation of integral probability metrics,''
  \emph{Electronic Journal of Statistics}, vol.~6, pp. 1550--1599, 2012.

\bibitem{BDKRW21}
J.~Birrell, P.~Dupuis, M.~A. Katsoulakis, L.~Rey-Bellet, and J.~Wang,
  ``Variational representations and neural network estimation of {R}{\'e}nyi
  divergences,'' \emph{SIAM Journal on Mathematics of Data Science}, 2021.

\bibitem{DM22}
C.~Domingo-Enrich and Y.~Mroueh, ``Auditing differential privacy in high
  dimensions with the kernel quantum {R}\'enyi divergence,'' \emph{arXiv
  preprint arXiv:2205.13941}, 2022.

\bibitem{DGKK22}
V.~Doroshenko, B.~Ghazi, P.~Kamath, R.~Kumar, and P.~Manurangsi, ``Connect the
  dots: Tighter discrete approximations of privacy loss distributions,''
  \emph{Proceedings on Privacy Enhancing Technologies}, 2022.

\bibitem{DJT13}
K.~Dixit, M.~Jha, S.~Raskhodnikova, and A.~Thakurta, ``Testing the {L}ipschitz
  property over product distributions with applications to data privacy,'' in
  \emph{Theory of Cryptography Conference (TCC}, 2013.

\bibitem{statdp}
Z.~Ding, Y.~Wang, G.~Wang, D.~Zhang, and D.~Kifer, ``Detecting violations of
  differential privacy,'' in \emph{Proceedings of the 2018 ACM SIGSAC
  Conference on Computer and Communications Security}.\hskip 1em plus 0.5em
  minus 0.4em\relax Association for Computing Machinery, 2018, p. 475–489.

\bibitem{dpsniper}
B.~Bichsel, S.~Steffen, I.~Bogunovic, and M.~Vechev, ``Dp-sniper: Black-box
  discovery of differential privacy violations using classifiers,'' in
  \emph{2021 IEEE Symposium on Security and Privacy (SP)}.\hskip 1em plus 0.5em
  minus 0.4em\relax IEEE, 2021, pp. 391--409.

\bibitem{JUO20}
M.~Jagielski, J.~Ullman, and A.~Oprea, ``Auditing differentially private
  machine learning: How private is private sgd?'' \emph{Advances in Neural
  Information Processing Systems}, 2020.

\bibitem{JE19}
B.~Jayaraman and D.~Evans, ``Evaluating differentially private machine learning
  in practice,'' in \emph{USENIX Security Symposium}, 2019.

\bibitem{ROF21}
S.~Rahimian, T.~Orekondy, and M.~Fritz, ``Differential privacy defenses and
  sampling attacks for membership inference,'' in \emph{ACM Workshop on
  Artificial Intelligence and Security}, 2021.

\bibitem{CYZF20}
D.~Chen, N.~Yu, Y.~Zhang, and M.~Fritz, ``Gan-leaks: A taxonomy of membership
  inference attacks against generative models,'' in \emph{ACM SIGSAC Conference
  on Computer and Communications Security}, 2020.

\bibitem{Guo22}
C.~Guo, B.~Karrer, K.~Chaudhuri, and L.~van~der Maaten, ``Bounding training
  data reconstruction in private (deep) learning,'' in \emph{International
  Conference on Machine Learning {(ICML)}}.\hskip 1em plus 0.5em minus
  0.4em\relax {PMLR}, 2022.

\bibitem{Balle22}
B.~Balle, G.~Cherubin, and J.~Hayes, ``Reconstructing training data with
  informed adversaries,'' in \emph{43rd {IEEE} Symposium on Security and
  Privacy, {SP} 2022, San Francisco, CA, USA, May 22-26, 2022}.\hskip 1em plus
  0.5em minus 0.4em\relax {IEEE}, 2022, pp. 1138--1156.

\bibitem{LMFZ22}
F.~Lu, J.~Munoz, M.~Fuchs, T.~LeBlond, E.~V. Zaresky-Williams, E.~Raff,
  F.~Ferraro, and B.~Testa, ``A general framework for auditing differentially
  private machine learning,'' in \emph{Advances in Neural Information
  Processing Systems}, 2022.

\bibitem{NSTP+}
M.~Nasr, S.~Songi, A.~Thakurta, N.~Papernot, and N.~Carlin, ``Adversary
  instantiation: Lower bounds for differentially private machine learning,'' in
  \emph{2021 IEEE Symposium on security and privacy (SP)}, 2021.

\bibitem{NHSB+23}
M.~Nasr, J.~Hayes, T.~Steinke, B.~Balle, F.~Tram{\`e}r, M.~Jagielski,
  N.~Carlini, and A.~Terzis, ``Tight auditing of differentially private machine
  learning,'' \emph{arXiv preprint arXiv:2302.07956}, 2023.

\bibitem{jagielski2020auditing}
M.~Jagielski, J.~Ullman, and A.~Oprea, ``Auditing differentially private
  machine learning: How private is private sgd?'' \emph{Advances in Neural
  Information Processing Systems}, vol.~33, pp. 22\,205--22\,216, 2020.

\bibitem{AKOOMS23}
G.~Andrew, P.~Kairouz, S.~Oh, A.~Oprea, H.~B. McMahan, and V.~Suriyakumar,
  ``One-shot empirical privacy estimation for federated learning,'' \emph{arXiv
  preprint arXiv:2302.03098}, 2023.

\bibitem{KKPW14}
A.~Krishnamurthy, K.~Kandasamy, B.~Poczos, and L.~Wasserman, ``Nonparametric
  estimation of renyi divergence and friends,'' in \emph{International
  Conference on Machine Learning}.\hskip 1em plus 0.5em minus 0.4em\relax PMLR,
  2014.

\bibitem{M17}
I.~Mironov, ``R{\'e}nyi differential privacy,'' in \emph{IEEE computer security
  foundations symposium (CSF)}, 2017.

\bibitem{ZDW22}
Y.~Zhu, J.~Dong, and Y.-X. Wang, ``Optimal accounting of differential privacy
  via characteristic function,'' in \emph{International Conference on
  Artificial Intelligence and Statistics (AISTATS}, 2022.

\bibitem{BO13}
G.~Barthe and F.~Olmedo, ``Beyond differential privacy: Composition theorems
  and relational logic for f-divergences between probabilistic programs,'' in
  \emph{International Colloquium on Automata, Languages, and
  Programming}.\hskip 1em plus 0.5em minus 0.4em\relax Springer, 2013.

\bibitem{gretton2012kernel}
A.~Gretton, K.~M. Borgwardt, M.~J. Rasch, B.~Sch{\"o}lkopf, and A.~Smola, ``A
  kernel two-sample test,'' \emph{The Journal of Machine Learning Research},
  vol.~13, no.~1, pp. 723--773, 2012.

\bibitem{maurer2009empirical}
A.~Maurer and M.~Pontil, ``Empirical bernstein bounds and sample variance
  penalization,'' 2009.

\bibitem{DKB14}
T.~Desautels, A.~Krause, and J.~W. Burdick, ``Parallelizing
  exploration-exploitation tradeoffs in gaussian process bandit optimization,''
  \emph{Journal of Machine Learning Research}, 2014.

\bibitem{GSM17}
D.~Golovin, B.~Solnik, S.~Moitra, G.~Kochanski, J.~Karro, and D.~Sculley,
  ``Google vizier: A service for black-box optimization,'' in \emph{Proceedings
  of the 23rd ACM SIGKDD international conference on knowledge discovery and
  data mining}, 2017.

\bibitem{LSL16}
M.~Lyu, D.~Su, and N.~Li, ``Understanding the sparse vector technique for
  differential privacy,'' \emph{arXiv preprint arXiv:1603.01699}, 2016.

\bibitem{ACGMMTL16}
M.~Abadi, A.~Chu, I.~Goodfellow, H.~B. McMahan, I.~Mironov, K.~Talwar, and
  L.~Zhang, ``Deep learning with differential privacy,'' in \emph{{ACM SIGSAC}
  conference on computer and communications security}, 2016.

\bibitem{PW22}
Y.~Polyanskiy and Y.~Wu, ``Information theory: From coding to learning,''
  \emph{Book draft}, 2022.

\bibitem{MRT18}
M.~Mohri, A.~Rostamizadeh, and A.~Talwalkar, \emph{Foundations of machine
  learning}.\hskip 1em plus 0.5em minus 0.4em\relax MIT press, 2018.

\end{thebibliography}

\appendix
\section{Additional definitions, lemmas and proofs}
\begin{lemma*}
The variational representation for the Hockey stick divergence (see \cref{eq:var-f-divergence} in the supplementary material) is given by
\begin{equation}
H_{\epsilon}(P||Q) = \sup_{g:0\leq g\leq 1}\Expected{P}{g(X)}-\Expected{Q}{e^{\epsilon}g(X))}.
\end{equation}
\end{lemma*}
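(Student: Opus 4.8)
The plan is to specialize the general variational (Fenchel-type) representation of an $f$-divergence, which the referenced \cref{eq:var-f-divergence} supplies in the form
\[
D_f(P\|Q) = \sup_{g} \; \Expected{P}{g(X)} - \Expected{Q}{f^*(g(X))},
\]
where $f^*(y) = \sup_{x\ge 0}\{xy - f(x)\}$ is the convex conjugate of the (convex, lower semicontinuous) generator $f$. For the hockey-stick divergence the generator is $f(x) = \max\{0, x - e^\eps\}$, so the entire proof reduces to computing $f^*$ and reading off the resulting constraint set and integrand.

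First I would compute $f^*$ by a short case analysis in $y$. Splitting the inner supremum at $x = e^\eps$ (where $f$ changes slope from $0$ to $1$), one finds that the linear term $x(y-1)$ forces $f^*(y) = +\infty$ whenever $y > 1$, while for $y \le 1$ the maximizer sits at an endpoint: $f^*(y) = e^\eps y$ for $0 \le y \le 1$ and $f^*(y) = 0$ for $y < 0$. Thus $\operatorname{dom} f^*$ enforces the upper constraint $g \le 1$, and on the relevant range the penalty $\Expected{Q}{f^*(g(X))}$ is exactly $e^\eps\,\Expected{Q}{g(X)}$.

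It then remains to argue that the supremum may be restricted to $0 \le g \le 1$. The only subtlety is the branch $y < 0$, where $f^*(y) = 0$ rather than $e^\eps y$; but for any test function taking negative values on a set one can replace it by $0$ there without decreasing the objective (this only increases $\Expected{P}{g(X)}$ while leaving the penalty at $0$), so an optimal $g$ may be taken nonnegative. Combining this with the domain constraint collapses the general formula to $\sup_{0\le g\le 1} \Expected{P}{g(X)} - e^\eps\,\Expected{Q}{g(X)}$, which is the claim.

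I expect the case analysis for $f^*$ --- in particular keeping track of the $y<0$ versus $0\le y\le 1$ branches --- to be the only place demanding care; everything else is substitution. As an independent sanity check (and an alternative, conjugate-free proof), one can verify the identity directly: the candidate optimizer $g^\star = \ind_{\,dP/dQ > e^\eps}$ attains $\Expected{Q}{(dP/dQ - e^\eps)_+} = H_\eps(P\|Q)$, while for every $0\le g\le 1$ the chain $\Expected{P}{g(X)} - e^\eps\Expected{Q}{g(X)} = \Expected{Q}{g(X)\,(dP/dQ - e^\eps)} \le \Expected{Q}{(dP/dQ - e^\eps)_+}$ (using $g\ge 0$ then $g\le 1$) shows $g^\star$ is indeed the maximizer.
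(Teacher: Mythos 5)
Your main argument is essentially the paper's own proof: both compute the convex conjugate of the generator $f(x)=\max\{0,x-e^{\epsilon}\}$ by a case split at $x=e^{\epsilon}$ and then substitute into the general $f$-divergence variational representation \cref{eq:var-f-divergence}. The one substantive difference is the conjugate convention. The paper takes $f^*(y)=\sup_{x\in\mathbb{R}}\{xy-f(x)\}$, so that $f^*(y)=+\infty$ for $y<0$ and the constraint set $0\le g\le 1$ falls out of $\mathrm{dom}(f^*)=[0,1]$ immediately; you take the supremum over $x\ge 0$, which gives $f^*(y)=0$ on $y<0$, and you therefore need --- and correctly supply --- the extra observation that replacing $g$ by $\max(g,0)$ never decreases the objective (it can only raise $\Expected{P}{g(X)}$ while the penalty stays $0$ on that set). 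Both are legitimate readings of the duality, so this is bookkeeping, not a gap. Your closing ``sanity check'' deserves emphasis: it is in fact a complete, conjugate-free proof on its own under the standing assumption $P\ll Q$ of Definition~\ref{def:HS}, since exhibiting $g^\star=\ind_{dP/dQ>e^{\epsilon}}$ and the two-line chain $\Expected{P}{g(X)}-e^{\epsilon}\Expected{Q}{g(X)}=\Expected{Q}{g\,(dP/dQ-e^{\epsilon})}\le\Expected{Q}{(dP/dQ-e^{\epsilon})_+}$ settles both attainment and optimality; the paper establishes the analogous optimizer fact only later, via Hahn decomposition (Proposition~\ref{prop:signedmeasure}), in the proof of Lemma~\ref{lem:classification}, so your elementary route is arguably tidier and more self-contained.
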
 

\begin{proof}

By definition, the conjugate function $f^*$ of a function $f$ is given by ${f^*(y) =\sup_{x \in \mathbb{R}} yx - f(x)}$.  We first show that ${f^*(y) = e^{\epsilon}y}$ for $0<y<1$. For the Hockey stick divergence, ${f(x) = \max \{0, x-e^{\epsilon} \}}$, thus, 

\begin{align*}
    f^*(y) &=\sup_{x \in \mathbb{R}} y^Tx - \max \{0, x-e^{\epsilon} \} \\
\end{align*}

If $y<0$, $f^*(y)= \infty$ since $x$ can be arbitrarily negative. 
If $y>1$ and if $x>e^{\epsilon}$ we have that:

\begin{align*}
   f^*(y) =  \sup_{x >e^{\epsilon}} yx - \max \{0, x-e^{\epsilon} \} & = yx - x+e^{\epsilon} \\
    &=(y-1)x +e^{\epsilon}.
\end{align*}
The last expression can be arbitrarily large if $x$ goes to infinity.

Finally, if $0\leq y\leq1$, we consider two cases: $x\leq e^{\epsilon}$ and $x>e^{\epsilon}$ and show the supremum takes the same value in both ranges. When $x\leq e^{\epsilon}$:
\begin{align*}
   f^*(y) =  \sup_{x \leq e^{\epsilon}} yx - \max \{0, x-e^{\epsilon} \} &=\sup_{x \leq e^{\epsilon}} yx = ye^{\epsilon}.\\
\end{align*}
When $x> e^{\epsilon}$,
\begin{align*}
   f^*(y) =  \sup_{x >e^{\epsilon}} yx - \max \{0, x-e^{\epsilon} \} &=\sup_{x >e^{\epsilon}} (y-1)x +e^{\epsilon}.\\
\end{align*}

Since $y\leq1$, $y-1\leq 0$, so $(y-1)x$ is maximized when $x$ takes the smallest value, $x=e^{\epsilon}$. Consequently, 

\begin{align*}
   f^*(y) &=\sup_{x >e^{\epsilon}} (y^T-1)x +e^{\epsilon}.\\
          &= (y-1)e^{\epsilon} +e^{\epsilon} = ye^{\epsilon}.
\end{align*}

Thus, we conclude that $f^*(y)= \begin{cases}
ye^{\epsilon} & \text{ if } 0\leq y\leq1, \\
\infty & \text{ otherwise, }\end{cases}$ and $dom(f*) = [0,1]$.

The result follows by replacing $f^*$ in the the following variational representation of $f-$divergences \cite{PW22}: 
 \begin{equation*}
     D_f(P||Q) = \sup_{g:\calX \to dom(f^*)}\Expected{P}{g(x)} - \Expected{Q}{f^*(g(x))}
 \end{equation*}

\end{proof}



\begin{theorem}[Chernoff multiplicative bound, Theorem D.4. in \cite{MRT18}]
\label{thm:chernoff}
Let $X_1,..., X_n$ be independent random variables drawn according to some distribution $P$ with mean $\mu$ and support in $[0,M]$. Then for any $\eta \in [0,\frac{M}{\mu}-1]$ the following inequalities hold:

\begin{equation}
    \label{eq:chernoff1}
    P\left(\frac{1}{n}\sum_{i=1}^n X_i \geq(1+\eta) \mu\right) \leq e^{\frac{-n\mu\eta^2}{3M}}
\end{equation}

\begin{equation}
    \label{eq:chernoff2}
    P\left(\frac{1}{n}\sum_{i=1}^nX_i \leq(1-\eta) \mu\right) \leq e^{\frac{-n\mu\eta^2}{2M}}
\end{equation}

\end{theorem}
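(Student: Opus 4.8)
The plan is to apply the classical Chernoff (exponential Markov) method to the sum $S_n = \sum_{i=1}^n X_i$ and then optimize a free scale parameter. For the upper tail \eqref{eq:chernoff1} I would fix $t>0$ and write $P(S_n \geq (1+\eta)n\mu) = P(e^{tS_n} \geq e^{t(1+\eta)n\mu}) \leq e^{-t(1+\eta)n\mu}\,\E[e^{tS_n}]$ by Markov's inequality, and then factor $\E[e^{tS_n}] = \prod_{i=1}^n \E[e^{tX_i}]$ using independence.

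First I would bound the moment generating function of a single bounded variable. Since $x \mapsto e^{tx}$ is convex and $X_i \in [0,M]$, the chord inequality over the endpoints gives $e^{tX_i} \leq (1 - X_i/M) + (X_i/M)e^{tM}$; taking expectations and using $1+u \leq e^u$ yields $\E[e^{tX_i}] \leq \exp\!\big((\mu/M)(e^{tM}-1)\big)$. Writing $p := \mu/M$ and $s := tM$, substitution produces the clean exponential bound $P(S_n \geq (1+\eta)n\mu) \leq \exp\!\big(np[e^s - 1 - (1+\eta)s]\big)$.

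Next I would minimize the exponent in $s$. The stationarity condition $e^s = 1+\eta$, i.e. $s = \log(1+\eta)$, is admissible because $\eta \geq 0$, and substituting it back gives the sharp intermediate bound $\left(\tfrac{e^\eta}{(1+\eta)^{1+\eta}}\right)^{np}$. To reach the stated form I would invoke the scalar inequality $(1+\eta)\log(1+\eta) - \eta \geq \eta^2/3$, which converts the exponent into $-np\eta^2/3 = -n\mu\eta^2/(3M)$, as claimed. The lower tail \eqref{eq:chernoff2} follows by the symmetric argument with $t<0$ (equivalently $s<0$): the identical MGF bound and optimization give $\left(\tfrac{e^{-\eta}}{(1-\eta)^{1-\eta}}\right)^{np}$, and the companion inequality $(1-\eta)\log(1-\eta) + \eta \geq \eta^2/2$ yields the exponent $-n\mu\eta^2/(2M)$.

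I expect the main obstacle to be the two scalar calculus inequalities together with the bookkeeping of the admissible range for $\eta$. The constraint $\eta \leq M/\mu - 1$ is exactly what guarantees $(1+\eta)\mu \leq M$, so that the target deviation is feasible; it also explains why the looser constant $1/3$ (rather than $1/2$) is needed in the upper tail, since that is the slack required for the log-inequality to remain valid across the range where $\eta$ can exceed $1$. Each scalar inequality can be verified by checking that the difference function vanishes to second order at $\eta=0$ and controlling the sign of its derivative, but the behavior near the upper endpoint of the range is where care is genuinely required.
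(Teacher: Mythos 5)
The paper never proves this theorem --- it is imported by citation to [MRT18] --- so the only question is whether your argument stands on its own. Your skeleton is the standard exponential-moment proof, and most of it is sound: the Markov step, the endpoint-chord bound $\E[e^{tX_i}]\le \exp\bigl((\mu/M)(e^{tM}-1)\bigr)$ (valid for either sign of $t$), the optimization at $s=\log(1+\eta)$, the intermediate bounds $\bigl(e^{\eta}/(1+\eta)^{1+\eta}\bigr)^{n\mu/M}$ and $\bigl(e^{-\eta}/(1-\eta)^{1-\eta}\bigr)^{n\mu/M}$, and the lower-tail inequality $(1-\eta)\log(1-\eta)+\eta\ge\eta^2/2$, which does hold on $[0,1]$ (for $\eta>1$ the event in \eqref{eq:chernoff2} has probability zero since the $X_i$ are nonnegative, so that case is vacuous).

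The gap is in the upper tail. The scalar inequality you invoke, $(1+\eta)\log(1+\eta)-\eta\ge\eta^2/3$, is false once $\eta$ exceeds roughly $1.8$: at $\eta=2$ the left side is $3\log 3-2\approx 1.296$ while the right side is $4/3\approx 1.333$. Your claim that the constraint $\eta\le M/\mu-1$ is ``exactly the slack required for the log-inequality to remain valid'' is backwards: that constraint only makes the deviation feasible (it ensures $(1+\eta)\mu\le M$) and does nothing to cap $\eta$, which can be arbitrarily large when $\mu\ll M$. The Taylor expansion $(1+\eta)\log(1+\eta)-\eta=\eta^2/2-\eta^3/6+\cdots$ explains the constant: $1/2$ fails for the upper tail near zero because of the negative cubic term, and $1/3$ buys validity only up to $\eta\approx 1.8$, not on the whole stated range. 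Worse, no repair of your final step is possible, because the statement itself is false on the full range $[0,M/\mu-1]$: take $X_i\sim\mathrm{Bernoulli}(0.01)$, so $M=1$, $\mu=0.01$, and $\eta=9$; by Stirling, $P(\tfrac1n\sum_i X_i\ge 0.1)$ is at least $e^{-n\,\mathrm{KL}(0.1\|0.01)}$ up to polynomial factors, and $\mathrm{KL}(0.1\|0.01)\approx 0.145 < 0.27=\mu\eta^2/3$, contradicting \eqref{eq:chernoff1} for large $n$. What your argument does correctly establish is the restricted statement with $\eta\in[0,1]$ (or, for all $\eta\ge 0$, the weaker Bennett-type exponent $n\mu\eta^2/\bigl(M(2+2\eta/3)\bigr)$); fortunately that restricted version is all this paper ever needs, since Theorem~\ref{thm:sample_complexity_renyi} invokes the bound only with $\eta\in(0,1]$.
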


\begin{theorem*}{(Theorem~\ref{thm:sample_complexity_renyi} in main body).}
Let $P$ and $Q$ be two distributions. Let $\Gamma$ be defined as in \cref{def:renyi-partial-defs}, and $h \in \Gamma$, $h\colon  \calX \to \mathbb{R}$ be a function bounded by $C>0$, i.e.,  $\sup_{x \in \calX} |h(x)| < C$. Let $\bX_0 = (X_{0,1},...,X_{0,n})$ and $\bX_1 = (X_{1,1},...,X_{1,n})$ be $n$  realizations of $P$ and $Q$, respectively. Then, if $\eta \in (0,1]$, and $$n \geq \max\left(\frac{3e^{2(\alpha-1)C}\log(2/\beta)}{\eta^2}, \frac{2e^{\alpha C}\log(2/\beta)}{\eta^2} \right),$$ with probability at least  $1-\beta$, it follows that 
\begin{align}
 R_\alpha&(P||Q) \geq
 R^{h,n}_{\alpha}(\bX_0|| \bX_1)  - \log \left(\frac{1+\eta}{1-\eta} \right)
\end{align}
\end{theorem*}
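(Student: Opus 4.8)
The plan is to reduce the statement to a concentration argument for two empirical averages. First, since $h \in \Phi_C \subseteq \Gamma$, the variational identity in Theorem~\ref{thm:variational-renyi} immediately gives $R_\alpha(P\|Q) \geq R^h_\alpha(P\|Q)$, because $R_\alpha(P\|Q)$ is the supremum over $\Gamma$ of exactly the quantity defining $R^h_\alpha$. Thus it suffices to show that, with probability at least $1-\beta$,
$$R^h_\alpha(P\|Q) \ge R^{h,n}_\alpha(\mathbf{X}_0\|\mathbf{X}_1) - \log\left(\frac{1+\eta}{1-\eta}\right).$$
This removes the supremum and turns the claim into a purely statistical statement about how well the two empirical expectations in \eqref{eq:empirical-renyi} approximate their population counterparts.

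Next, I would introduce the shorthand $\mu_0 := \int e^{(\alpha-1)h}\,dP$ and $\mu_1 := \int e^{\alpha h}\,dQ$, together with their empirical versions $\hat\mu_0, \hat\mu_1$, so that $R^h_\alpha = \tfrac{\alpha}{\alpha-1}\log\mu_0 - \log\mu_1$ and $R^{h,n}_\alpha = \tfrac{\alpha}{\alpha-1}\log\hat\mu_0 - \log\hat\mu_1$. Since $|h| < C$ and $\alpha > 1$, the summands $e^{(\alpha-1)h(X_{0,i})}$ lie in $[0, e^{(\alpha-1)C}]$ and $e^{\alpha h(X_{1,i})}$ lie in $[0, e^{\alpha C}]$, and the means satisfy $\mu_0 \ge e^{-(\alpha-1)C}$ and $\mu_1 \ge e^{-\alpha C}$. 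This is precisely where the bound on $h$ enters: it controls both the support maximum and a lower bound on each mean, which is exactly the input the multiplicative Chernoff bound requires.

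I would then apply Theorem~\ref{thm:chernoff} twice. Applying the upper-tail bound \eqref{eq:chernoff1} to the i.i.d.\ variables $e^{(\alpha-1)h(X_{0,i})}$ (support max $M = e^{(\alpha-1)C}$, mean $\mu_0$) gives $\hat\mu_0 \le (1+\eta)\mu_0$ except with probability $e^{-n\mu_0\eta^2/(3M)} \le e^{-n\eta^2 e^{-2(\alpha-1)C}/3}$, using $\mu_0/M \ge e^{-2(\alpha-1)C}$; requiring this to be at most $\beta/2$ yields the first term of the sample-complexity threshold $\underline{n}$. Symmetrically, applying the lower-tail bound \eqref{eq:chernoff2} to $e^{\alpha h(X_{1,i})}$ gives $\hat\mu_1 \ge (1-\eta)\mu_1$ except with probability at most $\beta/2$, yielding the second term of $\underline{n}$. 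A union bound then makes both events hold simultaneously with probability at least $1-\beta$.

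Finally, on this good event I would chain the inequalities: from $\hat\mu_0 \le (1+\eta)\mu_0$ we obtain $\tfrac{\alpha}{\alpha-1}\log\mu_0 \ge \tfrac{\alpha}{\alpha-1}\log\hat\mu_0 - \tfrac{\alpha}{\alpha-1}\log(1+\eta)$, and from $\hat\mu_1 \ge (1-\eta)\mu_1$ we obtain $-\log\mu_1 \ge -\log\hat\mu_1 + \log(1-\eta)$; adding these produces the claimed lower bound on $R^h_\alpha$, hence on $R_\alpha(P\|Q)$. The step requiring the most care is the final bookkeeping: pairing the correct tail with each term (upper tail for the $P$-average, lower tail for the $Q$-average) and handling the coefficient $\tfrac{\alpha}{\alpha-1}$ on the first logarithm so that the two multiplicative factors $(1+\eta)$ and $(1-\eta)$ collapse into the single clean error term $\log\frac{1+\eta}{1-\eta}$. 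By contrast, the variational reduction and the two concentration estimates are routine.
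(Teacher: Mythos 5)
Your route is the paper's route: reduce to $R_\alpha(P\,\|\,Q)\ge R^h_\alpha(P\,\|\,Q)$ via the variational formula, apply the multiplicative Chernoff upper tail to the $P$-average $\hat\mu_0=\frac1n\sum_i e^{(\alpha-1)h(X_{0,i})}$ and the lower tail to the $Q$-average $\hat\mu_1=\frac1n\sum_i e^{\alpha h(X_{1,i})}$, take a union bound, and chain logarithms; your accounting of where $|h|<C$ enters also matches. The genuine gap is exactly the step you labelled ``final bookkeeping'' and asserted would go through: it does not. Adding your two inequalities gives
\[
R^h_\alpha(P\,\|\,Q)\;\ge\; R^{h,n}_\alpha(\bX_0\,\|\,\bX_1)\;-\;\tfrac{\alpha}{\alpha-1}\log(1+\eta)\;-\;\log\tfrac{1}{1-\eta},
\]
and since $\tfrac{\alpha}{\alpha-1}>1$ and $\log(1+\eta)>0$, this error term is strictly larger than $\log\frac{1+\eta}{1-\eta}$: the coefficient $\tfrac{\alpha}{\alpha-1}$ does not collapse, and no rearrangement makes it collapse, because it multiplies precisely the term whose deviation you controlled only at level $\eta$. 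To reach the stated clean error by this route you would instead have to bound the probability of $\{\hat\mu_0\ge(1+\eta)^{(\alpha-1)/\alpha}\mu_0\}$, i.e.\ run the upper-tail Chernoff bound at the smaller level $\eta''=(1+\eta)^{(\alpha-1)/\alpha}-1<\eta$, which inflates the first term of $\underline{n}$ by a factor of order $(\eta/\eta'')^2$ and degenerates as $\alpha\to1^{+}$. As written, what you have actually proved is the theorem with the larger error term, not the theorem as stated.

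For fairness: the paper's own proof stumbles at the very same point, only less visibly. It observes that the Chernoff bad event $\{\hat\mu_0\ge(1+\eta)\mu_0\}$ \emph{implies} the event $\{\tfrac{\alpha}{\alpha-1}\log(\hat\mu_0/\mu_0)\ge\log(1+\eta)\}$ (via $\log(1+\eta)\ge\tfrac{\alpha-1}{\alpha}\log(1+\eta)$), and then asserts that this latter event has probability at most $\beta/2$ --- but the implication runs the wrong way: the latter event is the larger one, so the Chernoff bound on the former does not bound its probability. A second, smaller mismatch you also inherited: treating the $Q$-side ``symmetrically'' requires $1/\mu_1\le e^{\alpha C}$ and yields the requirement $n\ge 2e^{2\alpha C}\log(2/\beta)/\eta^2$, not the stated second term $2e^{\alpha C}\log(2/\beta)/\eta^2$ (the paper's proof keeps the factor $1/\mu_2$ in its intermediate threshold and then silently drops it in the theorem statement). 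So your proposal is faithful to the paper's argument, but both it and the paper leave the same hole; closing it requires either enlarging the error term to $\tfrac{\alpha}{\alpha-1}\log(1+\eta)+\log\tfrac{1}{1-\eta}$ (and replacing $e^{\alpha C}$ by $e^{2\alpha C}$ in $\underline{n}$), or strengthening the sample-size requirement as described above.
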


\begin{proof}
Define $\mu_1 = \Expected{P}{e^{(\alpha-1)h(x)}}$, and $\mu_2 = \Expected{Q}{e^{\alpha h(x)}}$, 
and also $M_1 = e^{(\alpha-1)C}$ and $M_2=e^{\alpha C} $. 

From Chernoff's multiplicative bound ( \cref{thm:chernoff}, \cref{eq:chernoff1}) we know that for $\eta_1\in[0, \frac{M_1}{\mu_1}-1]$, with probability less than $e^{\frac{-n\mu_1\eta_1^2}{3M_1}}$, we have
\begin{equation*}
    \frac{\frac{1}{n}\sum_{i=1}^n e^{(\alpha-1)h(x_i)}}{\mathbb{E} [e^{(\alpha-1)h(X)}] } \geq (1+\eta_1)
\end{equation*}
This implies that 
\begin{align*}
    & \log \left[ \frac{1}{n}\sum_{i=1}^n e^{(\alpha-1)h(x_i)}\right] - \log \Expected{P}{e^{(\alpha-1)h(X)}} \\
    &\geq\log(1+\eta_1) \geq \tfrac{\alpha-1}{\alpha }\log(1+\eta_1),
\end{align*}

or equivalently,

\begin{align}
&\tfrac{\alpha}{\alpha-1 }\log \frac{1}{n} \sum_{i=1}^n e^{(\alpha-1)h(x_i)} 
 -\tfrac{\alpha}{\alpha-1}\log \Expected{P} {e^{(\alpha-1)h(X)}} \nonumber  \\
&\geq \log(1+\eta_1) \label{eq:part_one}
\end{align}
Note that by setting $n\geq \frac{3e^{(\alpha-1)C}\log(2/\beta)}{\mu_1 \eta_1^2}$ the above bound holds with probability  at most $\beta/2$. A similar analysis (using \cref{eq:chernoff2}) shows that for $ n\geq \frac{2e^{\alpha C}\log(2/\beta)}{\mu_2 \eta_2^2}$ with probability at most $\beta/2$ the following bound holds:
\begin{align}
\label{eq:part_two}
    &\log \Expected{Q}{e^{\alpha h(X_1)}}
     -\log \left[\frac{1}{n}\sum_{i=1}^n e^{\alpha h(X_{1,i})}\right] \nonumber \\
     &\geq \log\left(\frac{1}{1-\eta_2}\right).
\end{align}
Finally, summing \cref{eq:part_one} and \cref{eq:part_two}, and using the union bound, with probability $1-\beta$ we have that
\begin{align}
\nonumber R^{h,n}(\bX_0||\bX_1) - \log\left(\frac{1}{1-\eta_2}\right) - \log(1+\eta_1) 
\leq R^h(P || Q).
\end{align}
Letting $\eta = \min(\eta_1, \eta_2)$, 

\begin{equation}
\label{eq:tmp1-renyi-comp}
    R^{h,n}(\bX_0||\bX_1) - \log\left(\frac{1+\eta}{1-\eta}\right) 
\leq R^h(P || Q).
\end{equation}

Finally, since $h \in \Gamma$, using the variational formulation for the R\'enyi divergence,
\begin{align}
     \label{eq:tmp2-renyi-comp}
   R_\alpha(P || Q) &= \sup_{h \in \Gamma} R_\alpha^h(P ||Q) \\
         &\geq R_\alpha^h(P || Q)\\
\end{align}

and the result follows by combining  \cref{eq:tmp1-renyi-comp} and \cref{eq:tmp2-renyi-comp}.
\end{proof}

\begin{definition}
\label{def:HS}
Given a function $f:[0,\infty) \to \mathbb{R}$ such that $f(1)=0$ and $\lim_{t\to0^+}f(t) = f(0)$. Let $P$ and $Q$ be two probability measures over space$\calX$ such that $P$ is absolutely continuous respect to $Q$. The $f-$divergence between $P$ and $Q$ is defined as 

\begin{equation}
    D_f(P||Q) = \Expected{Q}{f\left(\frac{dP}{dQ} \right)}
\end{equation}

It will be useful to see $D_f$ through its variational representation:

\begin{equation}
    \label{eq:var-f-divergence}
    D_f(P||Q) = \sup_{g:\mathcal{X} \to \text{dom}(f_{ext}^*)} \E_P\left[g(X)\right]-\E_Q\left[f^*(g(X))\right].
\end{equation}
\end{definition}
\begin{lemma}
 Let $\epsilon > 0$ Fix two distributions $P$ and $Q$ over $\calX$. Let $\mathbb{D}$ be a mixture distribution over $\calX \times \{0, 1\}$ defined as $\mathbb{D} = \frac{e^\epsilon}{1 + e^\epsilon} Q \times \delta_0 + \frac{1}{1 + e^\epsilon} P \times \delta_1$. That is, $(X, Y) \sim \mathbb{D}$ labels samples from $Q$ as $0$ and samples from $P$ as $1$. Then
 $$H_\epsilon(P || Q) = (1 + e^\epsilon) \sup_{g} \mathbb{D}(g(X) = Y) - e^\epsilon$$
 \end{lemma}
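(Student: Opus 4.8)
The plan is to reduce the claimed identity to the variational representation of the Hockey-stick divergence in \eqref{eq:hs-variational}, which I may assume. The bridge is a direct computation showing that, for each fixed binary classifier $g\colon \calX \to \{0,1\}$, the accuracy functional $(1+e^\epsilon)\mathbb{D}(g(X)=Y) - e^\epsilon$ coincides exactly with the variational objective $\Expected{P}{g(X)} - \Expected{Q}{e^\epsilon g(X)}$ appearing on the right-hand side of \eqref{eq:hs-variational}.

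First I would unfold the definition of $\mathbb{D}$. Since a draw $(X,Y)\sim\mathbb{D}$ is either $(X\sim Q,\,Y=0)$ with weight $e^\epsilon/(1+e^\epsilon)$ or $(X\sim P,\,Y=1)$ with weight $1/(1+e^\epsilon)$, conditioning on the label gives
\[
\mathbb{D}(g(X)=Y) = \tfrac{e^\epsilon}{1+e^\epsilon}\,\Prob{X\sim Q}{g(X)=0} + \tfrac{1}{1+e^\epsilon}\,\Prob{X\sim P}{g(X)=1}.
\]
Because $g$ is $\{0,1\}$-valued, $\Prob{X\sim P}{g(X)=1}=\Expected{P}{g(X)}$ and $\Prob{X\sim Q}{g(X)=0}=1-\Expected{Q}{g(X)}$. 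Substituting and clearing the factor $(1+e^\epsilon)$ collapses the expression to the affine identity
\[
(1+e^\epsilon)\,\mathbb{D}(g(X)=Y) - e^\epsilon = \Expected{P}{g(X)} - e^\epsilon\,\Expected{Q}{g(X)},
\]
valid for every binary $g$. Taking the supremum of both sides over $g\colon\calX\to\{0,1\}$ and pulling the positive constant $(1+e^\epsilon)$ and the additive constant $e^\epsilon$ outside the supremum recovers the left-hand side of the lemma.

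The remaining step, which I expect to be the only genuine obstacle, is to match the right-hand side supremum (over binary $g$) to the one in \eqref{eq:hs-variational} (over all $g$ with $0\le g\le 1$). The inequality $\sup_{g\in\{0,1\}}\le \sup_{0\le g\le 1}$ is immediate from the inclusion of constraint sets; for the reverse direction I would argue that the relaxed problem admits a binary optimizer. Indeed, writing the objective as $\int g\,(dP - e^\epsilon\,dQ)$, the integrand is maximized pointwise over $g(x)\in[0,1]$ by the indicator $g^\star(x)=\ind_{\,dP/dQ(x) > e^\epsilon}$, which is itself $\{0,1\}$-valued; hence the two suprema agree and both equal $H_\epsilon(P\|Q)$. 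Equivalently, one may invoke linearity of the objective in $g$ together with the fact that the binary functions are exactly the extreme points of the convex set $\{g:0\le g\le 1\}$, so an affine functional attains its supremum at such a point. Combining this equality with the affine identity from the previous step yields precisely $H_\epsilon(P\|Q)=(1+e^\epsilon)\sup_{g}\mathbb{D}(g(X)=Y)-e^\epsilon$, completing the proof.
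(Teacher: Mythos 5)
Your proof is correct, and its overall skeleton matches the paper's: both reduce the lemma to the variational representation \eqref{eq:hs-variational} via the same affine identity $(1+e^\epsilon)\,\mathbb{D}(g(X)=Y)-e^\epsilon=\Expected{P}{g(X)}-e^\epsilon\Expected{Q}{g(X)}$ for binary $g$, and both then must show that restricting the supremum to $\{0,1\}$-valued classifiers loses nothing. Where you diverge is in how that last reduction is justified. The paper invokes Hahn's decomposition of the signed measure $\nu=P-e^\epsilon Q$, takes $g^*=\ind_A$ for the maximal positive set $A$, and proves optimality through a separate proposition on signed measures (argued via simple functions and dominated convergence). You instead maximize the linear objective $\int g\,(dP-e^\epsilon dQ)$ pointwise, observing that the maximizer over $g(x)\in[0,1]$ is the indicator $\ind_{dP/dQ>e^\epsilon}$. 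Your route is shorter and more elementary, but it presupposes the density ratio $dP/dQ$ exists; this is harmless here since the paper's Definition~\ref{def:HS} of the hockey-stick divergence already assumes $P\ll Q$ (and one can always pass to a common dominating measure such as $P+Q$ to remove even that), whereas the Hahn-decomposition argument applies verbatim to mutually singular measures. One caveat: your alternative justification via extreme points of the convex set $\{g:0\le g\le 1\}$ is not complete as stated --- an affine functional need not \emph{attain} its supremum at an extreme point without a compactness hypothesis (e.g., weak-* compactness in $L^\infty$ plus Bauer's maximum principle) --- so the pointwise argument should be regarded as the actual proof, with the extreme-point remark as heuristic.
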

\begin{proof}
We begin by showing that the variational formula for the hockey-stick divergence is maximized for functions taking values in $\{0, 1\}$. Using Hahn's decomposition theorem let $A$ and $B$ be such that $\calX = A \cup B$, $A\cap B = \emptyset$ and $A$ is the maximal positive set of the signed measure $\nu = P - e^\eps Q$. That is $P(A) - e^\eps Q(A) > P(A \cap C) - e^\eps Q(A \cap C) > 0$ for every $C \subset \calX$ and $P(B) - e^ep Q(B) \leq (B \cap C) - e^\eps P(B \cap C) < 0$ for every $C$. 
Let $g^*$ be defined as $g^*(x) = 1$ if $x \in A$ and $0$ otherwise. We show that $g^*$ maximizes:
$$\E_P[g(x)] - \E_Q(e^\eps g(X)]$$
First notice that the above expression evaluated on $g^*$ is given by
$$P(A) - e^\eps Q(A).$$ Moreover, by Proposition~\ref{prop:signedmeasure} we also know that for any other function $g$ we must have
$$\E_P[g(X)] - e^\eps\E_Q[g(X)] = \E_\nu[g(X)] \leq \nu(A) = P(A) - e^\eps Q(A)$$
Now notice that by definition of $\mathbb{D}$ we have 
\begin{align*}
&\mathbb{D}(g(X) = Y)\\
&= \frac{e^\eps}{1 + e^\eps} Q(g(X)= 0) + \frac{1}{1 + e^\eps}P(g(X) = 1)\\
&= 
\frac{e^\eps}{1 + e^\eps}( 1 -  Q(g(X)= 1)) +   \frac{1}{1 + e^\eps}(P(g(X) = 1))  \\
& = \frac{1}{1 + e^\eps} \left(P(g(X) = 1) -e^\eps Q(g(X) = 1)\right) + \frac{e^\eps}{1 + e^\eps} \\
&= \frac{1}{1 + e^\eps} \left(\E_P[g(X)] - e^\eps \E_Q[g(X)]\right)+\frac{e^\eps}{1 + e^\eps} \\
\end{align*}
Taking the supremum over $g \colon \calX \to \{0, 1\}$ we obtain the following:

$$\sup_{g} \mathbb{D}(g(X) = Y) - \frac{e^\epsilon}{1 + e^\epsilon} = \frac{1}{1 + e^\epsilon} H_\epsilon(P||Q)$$
\end{proof}

\begin{proposition}
\label{prop:signedmeasure}
Let $g\colon \calX \to [0, 1]$ be a measurable function. Let $\nu$ be a signed, finite measure and let $\E_\nu[g(X)] = \int g(x) d\nu(x)$ denote the Lebesgue integral of $g$. Let $A$ and $B$ be the maximal positive and negative sets of $\nu$ according to Hahn's decomposition theorem then the following bound holds for any $g$:
\begin{equation}
    \E_{\nu}[g(X)] \leq \nu(A)
\end{equation}
\end{proposition}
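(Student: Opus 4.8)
The plan is to reduce the claim to the two defining properties of the Hahn decomposition. By Hahn's theorem I may write $\calX = A \cup B$ as a disjoint union in which $\nu$ is nonnegative on every measurable subset of $A$ and nonpositive on every measurable subset of $B$. Since $\nu$ is finite and $g$ is bounded (hence $\nu$-integrable), I would begin by splitting the integral along this partition:
\[
\E_\nu[g(X)] = \int_A g(x)\, d\nu(x) + \int_B g(x)\, d\nu(x).
\]

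Next I would bound each piece separately, exploiting the two-sided constraint $0 \le g \le 1$ in opposite ways on the two sets. On the positive set $A$ the restriction of $\nu$ acts as a nonnegative measure, so monotonicity of the integral together with $g \le 1$ gives $\int_A g\, d\nu \le \int_A 1\, d\nu = \nu(A)$. On the negative set $B$ the restriction of $\nu$ is nonpositive, so integrating the nonnegative function $g \ge 0$ against it yields $\int_B g\, d\nu \le 0$. Adding the two estimates produces $\E_\nu[g(X)] \le \nu(A) + 0 = \nu(A)$, which is exactly the asserted inequality.

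The argument is essentially mechanical once the decomposition is available, so I do not expect a substantive obstacle. The only point deserving a word of care is the legitimacy of treating $\int_A g\,d\nu$ and $\int_B g\,d\nu$ as integrals against genuine signed-constant-sign measures; this is precisely the content of the Jordan decomposition $\nu = \nu^+ - \nu^-$ with $\nu^+(E) = \nu(E \cap A)$ and $\nu^-(E) = -\nu(E \cap B)$, which Hahn's theorem supplies directly and which requires no further verification. It is worth noting that the bound is tight: equality holds for $g = \ind_A$, recovering $\nu(A)$ exactly, which is consistent with the use of this proposition in the proof of Lemma~\ref{lem:classification}, where the maximizing classifier is the indicator of the maximal positive set.
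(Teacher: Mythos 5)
Your proof is correct, and it takes a genuinely more direct route than the paper's. Both arguments rest on the same core observation---split the integral over the Hahn decomposition, then use $g \le 1$ on the positive set $A$ and $g \ge 0$ on the negative set $B$---but the execution differs. The paper first establishes the bound by hand for simple functions $g = \sum_{i} \alpha_i \ind_{C_i}$, and then extends to general measurable $g$ by approximation and the dominated convergence theorem. You bypass the approximation step entirely: writing $\nu^+(E) = \nu(E \cap A)$ and $\nu^-(E) = -\nu(E \cap B)$ as the (unsigned, finite) Jordan parts, monotonicity of the Lebesgue integral against genuine positive measures gives $\int_A g\, d\nu = \int g\, d\nu^+ \le \nu^+(\calX) = \nu(A)$ and $\int_B g\, d\nu = -\int g\, d\nu^- \le 0$, and adding finishes the proof. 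What your version buys is brevity and the avoidance of the limit-passing step, which the paper states somewhat loosely (dominated convergence for a signed measure must be applied to $\nu^+$ and $\nu^-$, or to the total variation $|\nu|$, and the paper's display also has cosmetic slips, e.g.\ asserting $\nu(A \cap C_i) > 0$ where only $\ge 0$ is guaranteed). What the paper's version buys is self-containedness: it does not presuppose monotonicity of integration against the Jordan parts, re-deriving it on simple functions instead. Your closing remark that equality holds at $g = \ind_A$ is also correct and is exactly how the proposition is used in the proof of Lemma~\ref{lem:classification}.
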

\begin{proof}
We begin the proof by considering the simple case where $g(x) = \sum_{i=1}^n \alpha_i \ind_{C_i}(x)$ for some disjoint measurable sets $C_i$ satisfying $\bigcup_{i=1}^n C_i = \calX$ and $\alpha_i \in [0, 1]$. We refer to such functions as \emph{simple} functions. In this case
\begin{align*}
    \E_{\nu}[g(X)] &= 
    E_{\nu}[g(X) \ind_A ] + [g(X) \ind_B] \\
    & = \sum_{i=1}^n  \alpha_i \E_{\nu}[\ind_{A\cap C_i} ] + \sum_{i=1}^n\alpha \E_{\nu}[\ind_{B \cap C_i}]\\
    &= \sum_{i=1}^n\alpha_i \nu (A \cap C_i) + \sum_{i=1}^n\alpha_i \nu(B \cap C_i) \\
    &\leq \sum_{i=1}^n \alpha_i \nu(A \cap C_i) 
\end{align*}
Where we have used the  the property of Hahn's decomposition theorem: $ \nu(B \cap C_i) \leq 0$ for all $i$ as well as the fact that $\alpha_i > 0$. Using the same property for the positive set $A$ we have that $\nu(A \cap C_i) > 0$ for all $i$. Moreover since $\alpha_i \leq 1$ we have that the above quantity is bounded by:
\begin{equation*}
    \sum_{i=1}^n \nu(A \cap C_i) = \nu(A)
\end{equation*}
For a general measurable function $g \colon \calX \to [0, 1]$ we use the well known property that one can approximate $g$ as a limit of simple functions $g_n$. By the dominated convergence theorem we then have that 
\begin{equation*}
    \E_\nu[g(x)]  = \E_\nu[\lim_{n\to \infty} g_n(x)] \leq \nu(A).
\end{equation*}
\end{proof}

\begin{proof}[Proof of Theorem \ref{thm:mmd}] Let $k_x = k(x, \cdot)$ for all $x \in \calX$. By the reproducing property we have $k_x \in H$ and $f(x) = \inner{f, k_x}$ for all $f \in H$. In particular this implies $k(x, y) = \inner{k_x, k_y}$. And since $k(x, x) \le 1$ we have
$\norm{k_x} = \sqrt{\inner{k_x, k_x}} = \sqrt{k(x, x)} \le 1$
Therefore
\begin{equation}
|f(x)| = |\inner{f, k_x}| \le \norm{f} \norm{k_x} \le \norm{f}. \label{eq:mmdbounded}
\end{equation}
Let $P = \calM(D_0)$ and $Q = \calM(D_1)$. For any $f \in H$ such that $\norm{f} \le 1$ we have
\begin{align*}
\int f dP - \int f dQ &= \int f \frac{dP}{dQ} dQ - \int f dQ \\
&= \int \left(\frac{dP}{dQ} - 1\right) f dQ\\
&\le \sup_x |f(x)| \int \left|\frac{dP}{dQ} - 1\right| dQ \\
&\le \int \left|\frac{dP}{dQ} - 1\right| dQ
\end{align*}
where $\frac{dP}{dQ}$ is the Radon-Nikodym derivative and the second inequality follows from Eq.~\eqref{eq:mmdbounded}. Now partition $\calX$ as follows:
\begin{align*}
A &= \left\{x \in \calX : \frac{dP}{dQ}(x) \ge 1\right\}\\
B &= \left\{x \in \calX : \frac{dP}{dQ}(x) < 1\right\}
\end{align*}
Continuing from above we have
\[
\int \left|\frac{dP}{dQ} - 1\right| dQ = \int_A \left(\frac{dP}{dQ} - 1\right) dQ + \int_B \left(1 - \frac{dP}{dQ}\right) dQ
\]
We will bound each integral on the right-hand side. We have
\begin{align*}
\int_A \left(\frac{dP}{dQ} - 1\right) dQ &= P(A) - Q(A)\\
&\le e^{\eps} Q(A) + \delta - Q(A)\\
&= Q(A)(e^\eps - 1) + \delta
\end{align*}
where the inequality follows from the first part of the definition of $(\eps, \delta)$-differentially privacy. We also have
\begin{align*}
\int_B \left(1 - \frac{dP}{dQ}\right) dQ &= Q(B) - P(B)\\
&\le Q(B) - (e^{-\eps} Q(B) - e^{-\eps}\delta)\\
&= Q(B)(1 - e^{-\eps}) + e^{-\eps}\delta
\end{align*}
where the inequality follows from the second part of the definition of $(\eps, \delta)$-differentially privacy. Putting it all together, we have
\begin{align*}
&\int f dP - \int f dQ\\
&\le Q(A)(e^\eps - 1) + \delta + Q(B)(1 - e^{-\eps}) + e^{-\eps}\delta\\
&\le Q(A)(e^\eps - 1) + \delta + Q(B)(e^\eps - 1) + e^{-\eps}\delta\\
&= e^\eps - 1 + (1 + e^{-\eps})\delta
\end{align*}
where we also used $e^\eps - 1 \ge 1 - e^{-\eps}$, which can be proved by expanding and rearranging the inequality $(e^\eps - 1)^2 \ge 0$, and $Q(A) + Q(B) = 1$.
\end{proof}

\begin{proof}[Proof of Theorem \ref{thm:mmdbound}] Suppose $P$ and $Q$ satisfy $(\eps, \delta)$-differential privacy. By squaring both sides of Theorem \ref{thm:mmd} we have
\[
\MMD(P, Q, H)^2 \le (e^\eps - 1)^2  + 2(e^\eps - e^{-\eps})\delta + (1 + e^{-\eps})\delta^2.
\]
Let $X$ and $X'$ be independent random variables with distribution $P$. Let $Y$ and $Y'$ be independent random variables with distribution $Q$. By Theorem \ref{thm:kerneltest} we have
\[
\MMD(P, Q, H)^2 = \E[h(X, X', Y, Y')].
\]
By the empirical Bernstein bound \cite{maurer2009empirical} we have with probability $1 - \beta$
\[
\hat{\mu} - \E[h(X, X', Y, Y')] \le \sqrt{\frac{2\hat{\sigma}^2 \log \frac2\beta}{n}} + \frac{28 \log \frac2\beta}{3(n-1)}
\]
where we again used $-2 \le h(X, X', Y, Y') \le 2$. Putting the above together we have
\begin{align*}
0 &\le -\hat{m} + 2(e^\eps - e^{-\eps})\delta + (1 + e^{-\eps})\delta^2.
\end{align*}
Minimizing the right-hand side using the quadratic formula proves the theorem.
\end{proof}






\end{document}